\documentclass[11pt]{article}
\usepackage{fullpage}

\usepackage{belov_paper}

\usepackage{youngtab}

\hyphenation{dem-on-strat-ed al-go-rithm}

\newcommand{\Adv}{\mathop{\mathrm{ADV}^{\pm}}}

\newcommand{\C}{{\mathbb{C}}}
\newcommand{\I}{{\mathbb{I}}}
\newcommand{\J}{{\mathbb{J}}}

\newcommand{\spn}{\mathop{\mathrm{span}}}

\newcommand{\cp}{{\hspace{-0.5pt}\textsc{cp}\hspace{-0.5pt}}}
\newcommand{\se}{{\hspace{-0.5pt}\textsc{se}\hspace{-0.5pt}}}
\newcommand{\qp}{{\hspace{-0.5pt}\textsc{q}\hspace{-0.5pt}}}

\newcommand{\sh}{\mathop{\mathrm{sh}}}

\newcommand{\oW}{W}
\newcommand{\oX}{X}
\newcommand{\oY}{Y}
\newcommand{\oZ}{Z}

\newcommand{\OO}{\mathrm{O}}

\newcommand{\specht}[1]{S^{#1}}

\usepackage{tikz}
\usetikzlibrary{trees}
\usetikzlibrary{decorations.pathmorphing}
\usetikzlibrary{decorations.markings}

\def\tGamma{\tilde\Gamma}

\begin{document}

\title{Adversary Lower Bounds for the Collision \\ and the Set Equality Problems}
\author{
Aleksandrs Belovs\thanks{Faculty of Computing, University of Latvia.}
\and
Ansis Rosmanis\thanks{SPMS, Nanyang Technological University and Centre for Quantum Technologies.}
}
\date{}

\maketitle

\newcommand{\cpp}{{\sc Collision}\xspace}
\newcommand{\sep}{{\sc Set Equality}\xspace}

\newcommand{\w}[2]{{\stackrel{#1}{(#2)}}}

\begin{abstract}
We prove tight $\Omega(n^{1/3})$ lower bounds on the quantum query complexity of the \cpp and the \sep problems, provided that the size of the alphabet is large enough.
We do this using the negative-weight adversary method.
Thus, we reprove the result by Aaronson and Shi, as well as a more recent development by Zhandry.
\end{abstract}

\section{Introduction}

In theory, quantum query complexity is surprisingly well-understood.  For each function, the (negative-weight) adversary method~\cite{hoyer:advNegative} gives a tight characterisation of its quantum query complexity~\cite{reichardt:advTight}.  
The adversary bound is a semi-definite optimisation problem (SDP). 
It performs nicely under composition~\cite{reichardt:advTight}, and it was used to prove a strong direct product theorem~\cite{lee:strongDirect}.  There is no similar theory known for randomised query complexity.

In practice, however, it is quite hard to construct a feasible solution to the adversary bound SDP that would be close to the optimal.
In many cases, the positive-weight version of the bound is used.  It is the original version of the bound developed by Ambainis~\cite{ambainis:adv}.  
This version uses combinatorial reasoning, and it is easy to apply.
Unfortunately, the positive-weight adversary is subject to some severe constraints like the certificate complexity barrier~\cite{spalek:advEquivalent, zhang:advPower} and the property testing barrier~\cite{hoyer:advNegative}.
The latter one, for instance, states that, if each positive and each negative input differ in at least $\eps$ fraction of the input variables, no positive-weight adversary can prove a lower bound better than $\Omega(1/\eps)$.

The proper negative-weight adversary method usually requires spectral analysis of the adversary matrix.  It does not have many applications so far.
It was used to prove tight lower bounds for formulae evaluation~\cite{reichardt:formulae}.  
Also, a recent line of development~\cite{spalek:kSumLower, belovs:onThePower} resulted in a relatively general construction of adversary bounds superseding the certificate complexity barrier:
For any family $\cS$ of $\OO(1)$-sized subsets of $[n]$, a relatively simple optimisation problem gives a tight lower bound on the problem of detecting whether the input string $x\in[q]^n$ contains a subset of elements that belongs to $\cS$ and whose sum is divisible by $q$.  
The upper bound emerges from the model of non-adaptive learning graphs~\cite{belovs:learning}.
And the dual to this model gives a lower bound.
In particular, this gives nearly tight lower bounds for the $k$-{\sc{}Sum} and the {\sc Triangle-Sum} problems.

For some other functions, lower bounds on quantum query complexity were obtained using the polynomial method~\cite{beals:pol}.  
This method is known to be suboptimal~\cite{ambainis:polVsQCC}, but it is not subject to the limitations of the positive-weight adversary.
This method works well for functions with a lot of symmetries.
A notable example of its application is for the \cpp and the \sep problems.

In the \cpp problem, one has to decide whether the input function $x\colon [2n]\rightarrow[q]$ is one-to-one or two-to-one, provided that one of the two cases holds. 
The \sep problem is a special case of \cpp with an extra promise that the function $x$ is one-to-one on both subdomains $[1..n]$ and $[n+1..2n]$.
The \cpp problem was defined in~\cite{brassard:collision}, and a quantum $\OO(n^{1/3})$-query algorithm for the problem was given (which also works for \sep).
The \sep problem was defined by Shi~\cite{shi:collisionLowerOriginal} and conjectured to be as hard as \cpp.
By the property testing barrier, no positive-weight adversary can give more than a trivial lower bound for any of these problems.  
Consequently, all known lower bounds for these problems were obtained using the polynomial method.

First, Aaronson and Shi~\cite{shi:collisionLower} proved an $\Omega(n^{1/3})$ lower bound for the \cpp problem.  The proof was later simplified by Kutin~\cite{kutin:collisionLower}.
Next, Midrij\=anis showed an $\Omega((n/\log n)^{1/5})$ lower bound for \sep using a combination of the positive-weight adversary and the polynomial methods~\cite{midrijanis:setEquality}.
The lower bound for the \sep problem was used by Aaronson and Ambainis in their proof of the polynomial relation between the randomised and the quantum query complexity of partial permutation-invariant functions~\cite{aaronson:structure}.
Recently, Zhandry~\cite{zhandry:setEquality} proved a tight $\Omega(n^{1/3})$ lower bound for the \sep problem using rather complicated machinery from~\cite{zhandry:howToConstruct} based on the polynomial method.  This automatically strengthened the results in~\cite{aaronson:structure}.

In this paper, we extend the range of applications of the adversary bound, and use it to prove tight lower bounds for the \cpp and the \sep problems given that the size of the input alphabet, $q$, is at least $\Omega(n^2)$.
Thus, we resolve Shi's conjecture affirmatively, which was done independently from Zhandry's work.

There are several reasons why better understanding of the adversary method can be beneficial.  
First, this method is always tight.
Second, known constructions of the adversary lower bounds follow through duality to the upper bounds.  This simplifies the construction of the lower bound.  The construction in this paper is also based on duality, as explained in the beginning of \rfO(sec:const).
Interestingly, the proofs for the \cpp and the \sep problems in this paper are almost identical, showing that the adversary method can be easier adapted for a specific function, as soon as a lower bound for a similar function is obtained.  
This is in contrast to the polynomial method, as more than ten years separated Shi's and Zhandry's results.  
Also, we note that it is the first, to our knowledge, application of the adversary method that supersedes the property testing barrier.
We hope that some of our ideas will be useful in proving lower bounds via the adversary method for further problems.

The paper is organised as follows.  In \rfO(sec:prelim), we define the adversary method and basics of representation theory.  In \rfO(sec:analysis), we give a general treatment of the \cpp and the \sep problems, without actually defining the lower bound. 
In \rfO(sec:learning), we describe the dual learning graph for these problems, which inspires our adversary construction.
In \rfO(sec:const), we construct the lower bound.  Proofs of two technical results from this section are given in \rf{sec:proofs}.

As a final note about organisation of our paper, let use address a possible question of why we prove lower bounds for both the \cpp and the \sep problems.
Indeed, since the latter is a special case of the former, a lower bound for the former is superfluous.
Our justification is as follows.  First, \cpp is a better-known problem than \sep.  And then, the proof for \sep is slightly more involved and contains all the necessary ingredients for the \cpp lower bound.  Thus, a reader mainly interested in the \cpp problem can attain the proof by merely ignoring all parts on \sep.

\section{Preliminaries} \label{sec:prelim}

For integers $m$ and $\ell$, let $[m..\ell]$ denote the set $\{m,m+1,\ldots,\ell\}$, and we often use shorthand $[m]$ for $[1..m]$.
For sets $\cI$ and $\cJ$, let $\cI^\cJ$ denote the set of functions from $\cJ$ to $\cI$.  
We use shorthand $\cI^m$ for $\cI^{[m]}$ and $\cI^{m..\ell}$ for $\cI^{[m..\ell]}$.
We may also think of $\cI^m$ as strings of length $m$.

For finite sets $\cI$ and $\cJ$, an $\cI\times \cJ$ matrix is a matrix whose rows are labelled by the elements of $\cI$ and whose columns are labelled by the elements of $\cJ$.
For such a matrix $A$, and $i\in\cI$ and $j\in\cJ$, we denote by $A\elem[i,j]$ its $(i,j)$-th entry.  
We use similar notation for vectors.
We use $A^*$ for the conjugate operator.
Let $\circ$ denote the Hadamard (entry-wise) product of matrices.
We use $\bI$ and $\bJ$ to denote the identity and the all-ones matrices, respectively, to which, if necessary, we attach a subscript to indicate the space on which the correpsonding operactor acts.
If $\{b_1,\dots,b_m\}$ is a basis of $\bC^m$, we say that a vector $v = \alpha_1b_1+\cdots+\alpha_mb_m$ \emph{uses} a basis vector $b_i$ if the corresponding coefficient $\alpha_i$ is non-zero.

For a linear operator $A\colon V\to U$, its \emph{singular-value decomposition} is $A = \sum_i \sigma_i u_i v_i^*$, where $\sigma_i$ are positive real numbers, $\{v_i\}$ form an orthonormal system in $ V$ and $\{u_i\}$ in $ U$.  The names for $\sigma_i$, $u_i$, and $v_i$ are \emph{singular value}, \emph{left-singular vector}, and \emph{right-singular vector}, respectively.
The span of the left-singular vectors is called the \emph{image} and the span of the right-singular vectors the \emph{coimage} of $A$.
The \emph{spectral norm}, $\norm|A|$, is the largest singular value.
Singular vectors corresponding to the largest singular value are called \emph{principal}.
Finally, by principal singular vectors of $A$ \emph{on a subspace} $\cH\subseteq  U$ we understand principal singular vectors of the operator $A\Pi$, where $\Pi$ is the orthogonal projector on $\cH$.

\subsection{Adversary Method} \label{sec:adv}

We identify decision problems with Boolean-valued functions $f\colon \cD\to \{0,1\}$ with domain $\cD\subseteq[q]^{m}$. 
We call the inputs in $f^{-1}(1)$ and $f^{-1}(0)$ \emph{positive} and \emph{negative}, respectively.
We are interested in quantum query complexity of decision problems.  For the definitions and basic properties refer to~\cite{buhrman:querySurvey}.  
In the paper, we only require the knowledge of the (negative-weight) adversary bound that we are about to define.

\begin{defn}\label{defn:advMatrix}
 An {\em adversary matrix}\footnote{Compared to the general definition of an adversary matrix~\cite{hoyer:advNegative}, we  consider only a quarter of the matrix, as this quarter completely specifies the whole matrix (see~\cite{spalek:kSumLower} for details).
 }
 for a decision problem $f$ is a non-zero $f^{-1}(1)\times f^{-1}(0)$ matrix $\Gamma$.
 For any $i\in[m]$, the $f^{-1}(1)\times f^{-1}(0)$ matrix $\Delta_i$ is defined by
\[
 \Delta_i[\![x,y]\!] = \begin{cases}0,&x_i=y_i;\\1,&x_i\neq y_i.\end{cases}
\]
\end{defn}

\begin{thm}[Adversary bound~\cite{hoyer:advNegative,lee:stateConversion,spalek:kSumLower}]
\label{thm:adversary}
In notation of \refdefn{advMatrix}, the quantum query complexity of the decision problem $f$ is $\Theta\sA[\Adv(f)]$, where $\Adv(f)$ is the optimal value
of the semi-definite program
\begin{subequations}
\label{eq:adv}
\begin{alignat}{3}
 &{\mbox{\rm maximise }} &\quad& \norm|\Gamma| \\ 
 &{\mbox{\rm subject to }} && \norm|\Delta_i\circ\Gamma|\leq 1\quad \mbox{ for all }i\in[m],
\end{alignat}
\end{subequations}
where the maximisation is over all adversary matrices $\Gamma$ for $f$.
\end{thm}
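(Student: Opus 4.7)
The plan is to split the theorem into its two directions: the lower bound, that any quantum algorithm solving $f$ requires at least $\Adv(f)$ queries, and the matching upper bound, that there exists such an algorithm making $O(\Adv(f))$ queries.

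For the lower bound I would carry out the standard progress-function argument. Fix any $T$-query quantum algorithm, and let $|\psi_z^t\rangle$ denote its joint state on input $z$ after $t$ queries. Let $\Gamma$ be any feasible adversary matrix, and let $u,v$ be unit top left/right singular vectors of $\Gamma$ with singular value $\|\Gamma\|$. Define the scalar progress
\[
W^t \;=\; \sum_{x\in f^{-1}(1),\, y\in f^{-1}(0)} u\elem[x]\, \overline{v\elem[y]}\, \Gamma\elem[x,y]\, \langle \psi_x^t | \psi_y^t \rangle .
\]
Initially all inner products equal $1$, so $W^0 = u^*\Gamma v = \|\Gamma\|$. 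At the end, correctness forces $|\langle \psi_x^T|\psi_y^T\rangle|$ to be small whenever $f(x)\neq f(y)$, so $|W^T|$ is a small constant. The per-step estimate is that one oracle query changes $W^t$ by at most $2\max_i \|\Delta_i\circ\Gamma\| \leq 2$: when variable $i$ is queried, only entries with $x_i\neq y_i$ can move, and the total change is bounded via the operator-norm inequality $|a^*Mb|\leq\|M\|\,\|a\|\|b\|$ applied to $\Delta_i\circ\Gamma$. Chaining across the $T$ steps yields $T=\Omega(\|\Gamma\|)$, hence the quantum query complexity is at least $\Adv(f)$.

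For the tightness direction I would invoke the span-program / state-conversion framework of Reichardt and of Lee \etal\ Taking the SDP dual of~\eqref{eq:adv}, a feasible dual solution at objective value $\Adv(f)$ carries the combinatorial data of a span program. A generic construction converts any such dual solution into a bounded-error quantum algorithm, using phase estimation on a product of two reflections built from the dual vectors, with query count $O(\Adv(f))$. Correctness follows from an effective-spectral-gap analysis of the product of reflections, combined with standard error reduction.

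The main obstacle is the tightness direction. The lower bound is a clean and fairly short computation once one believes the progress template; it uses only elementary linear algebra and the Hadamard-product constraint $\|\Delta_i\circ\Gamma\|\leq 1$. The upper bound, by contrast, is substantially harder: one must show that \emph{every} feasible dual SDP solution is algorithmically realisable with no query-complexity overhead. The cleanest route is to recast decision as state conversion so that the reflection-based algorithm applies uniformly, and then to prove the effective-spectral-gap estimate that underpins the phase-estimation step; this is the technically demanding part, and is what the cited works together establish.
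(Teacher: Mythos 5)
This theorem is not proved in the paper at all: it is imported verbatim from the literature (the lower-bound direction from H{\o}yer--Lee--\v{S}palek, the matching algorithmic direction from Reichardt and Lee \etal, and the ``quarter-matrix'' formulation with rows indexed by $f^{-1}(1)$ and columns by $f^{-1}(0)$ from the $k$-sum paper), so there is no in-paper argument to compare yours against; your proposal must be judged as a reconstruction of the cited proofs. Your overall architecture is the right one, and your treatment of the tightness direction (dualise the SDP, build the reflection-based/span-program algorithm, analyse it via the effective spectral gap) is a fair summary of what the cited works establish.

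There is, however, a genuine gap in your lower-bound direction, and it sits exactly at the point that makes this the \emph{negative-weighted} adversary rather than Ambainis' positive-weighted one. You argue that at time $T$ correctness forces each $|\langle\psi_x^T|\psi_y^T\rangle|$ to be small and conclude that $|W^T|$ is small. With a general $\Gamma$ containing negative entries, bounding $W^T$ term by term only yields $|W^T|\le 2\sqrt{\eps(1-\eps)}\sum_{x,y}\absA|u\elem[x]|\,\absA|v\elem[y]|\,\absA|\Gamma\elem[x,y]|$, and this sum of absolute values can exceed $\norm|\Gamma|$ by an unbounded factor, so the chaining $W^0-|W^T|\le 2T$ no longer gives $T=\Omega(\norm|\Gamma|)$. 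Closing this requires the key output-condition lemma of H{\o}yer--Lee--\v{S}palek: one writes $\langle\psi_x^T|\psi_y^T\rangle$ through the final measurement projectors and bounds the resulting expression by a spectral-norm (not entrywise) argument, obtaining $|W^T|\le\bigl(2\sqrt{\eps(1-\eps)}+2\eps\bigr)\norm|\Gamma|$; only then does the telescoping give the claimed $\Omega(\norm|\Gamma|)$ bound. (Your per-query estimate via the spectral norm of $\Delta_i\circ\Gamma$ is fine in spirit, and your deferral of the converse direction to the state-conversion machinery is acceptable, since the paper itself defers the entire theorem to those references. A minor additional point: the statement here uses only the positive-versus-negative block of the adversary matrix, so one must also note that this quarter determines the full symmetrised matrix and that the bound is unchanged, as the footnote and the cited $k$-sum paper explain.)
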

Note that we can choose any adversary matrix $\Gamma$ and scale it so that the condition
$\max_i \norm|\Delta_i\circ\Gamma|\leq 1$ holds.  Consequently, we often use a relaxed condition
 $\norm|\Delta_i\circ\Gamma|= \mathrm{O}(1)$ instead of $\norm|\Delta_i\circ\Gamma|\leq 1$. 

Precise evaluation of $\norm|\Delta_i\circ\Gamma|$ may be hard, but we can upper bound
$\norm|\Delta_i\circ\Gamma|$ using the following trick first introduced in~\cite{belovs:adv-el-dist} and later used in~\cite{spalek:kSumLower,belovs:onThePower,spalek:adv-array}.
If $A$ is a matrix of the same dimensions as $\Delta_i$, we use the notation $\Delta_i\diamond A$ for a matrix $B$ satisfying
 \(
  \Delta_i\circ B=\Delta_i\circ A
 \).
Or, we write $A\stackrel{\Delta_i}{\longmapsto} B$.
Note that $B$ is not uniquely defined, 
and our task will be to choose one that fits our needs.
Now, from the fact that
\[
\gamma_2(\Delta_i) = \max_B\big\{\norm|\Delta_i\circ B|:\norm|B|\leq 1\big\}\leq 2
\]
(see~\cite{lee:stateConversion} for this and other facts about the $\gamma_2$ norm),
it follows that
\[
  \norm|\Delta_i\circ A| = \norm|\Delta_i\circ(\Delta_i\diamond A)|
  \leq 2\norm|\Delta_i\diamond A|.
\]
Note that we can always choose $\Delta_i\diamond A = A$
 and
\[
 \Delta_i\diamond(\alpha'A'+\alpha''A'')
  =\alpha'(\Delta_i\diamond A') + \alpha''(\Delta_i\diamond A'').
\]

In order to show that $\norm|\Delta_i\circ\Gamma| = \mathrm{O}(1)$, it suffices to show that
 $\norm|\Delta_i\diamond\Gamma| = \mathrm{O}(1)$ for some $\Delta_i\diamond\Gamma$.
 That is, it suffices to show that there is a way to 
 modify the entries $\Gamma\elem[x,y]$ of $\Gamma$ with $x_i=y_i$ so that the 
 spectral norm of the resulting matrix is bounded by a constant.

\subsection{Representation Theory}
\label{sec:representation}

In this section, we introduce basic notions from representation theory of finite groups with special emphasis on the symmetric group.  For more background, the reader may refer to~\cite{curtis:representationTheory, serre:representation} for general theory, and to~\cite{james:symmetricGroup, sagan:symmetricGroup} for the special case of the symmetric group.

Assume $G$ is a finite group.  
The \emph{group algebra} $\C G$ is the complex vector space with the elements of $G$ forming an orthonormal basis, where the multiplication law of $G$ is extended to $\C G$ by linearity.
A (left) \emph{$G$-module}, also called a \emph{representation} of $G$, is a complex vector space $V$ with the left multiplication operation by the elements of $\C G$ satisfying the usual associativity and distributivity conditions.  The module $V$ is equipped with an inner product invariant under the multiplication by the elements of $G$.
A \emph{$G$-morphism} (or just morphism, if $G$ is clear from the context) between two $G$-modules $V$ and $W$ is a linear operator $\theta\colon V\to W$ satisfying $\theta(uv) = u\theta(v)$ for all $u\in\C G$ and $v\in V$.

A $G$-module is called \emph{irreducible} (or just irrep for irreducible representation) if it does not contain a non-trivial $G$-submodule.
An essential basic result in representation theory is the following
\begin{lem}[Schur's Lemma]
Assume $\theta\colon V\to W$ is a morphism between two irreducible $G$-modules $V$ and $W$. 
Then, $\theta=0$ if $V$ and $W$ are non-isomorphic, otherwise, $\theta$ is uniquely determined up to a scalar multiplier.
\end{lem}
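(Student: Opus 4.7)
The plan is the classical two-step argument, exploiting nothing more than the irreducibility of $V$ and $W$ together with the fact that the kernel and image of any $\R G$-morphism are $\R G$-submodules.

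First, I would show that $\theta$ is either zero or an $\R G$-isomorphism. Since $\theta$ is $\R G$-linear, both $\ker\theta \subseteq V$ and $\mathrm{im}\,\theta \subseteq W$ are $\R G$-submodules. Irreducibility of $V$ forces $\ker\theta \in \{0,V\}$, and irreducibility of $W$ forces $\mathrm{im}\,\theta \in \{0,W\}$, so $\theta$ is either zero or a bijection; in the latter case it is automatically an $\R G$-isomorphism. When $V$ and $W$ are non-isomorphic, only $\theta = 0$ is possible, which settles the first half of the lemma.

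Second, for uniqueness up to a scalar when $V \cong W$: suppose $\theta_1,\theta_2 \colon V \to W$ are two nonzero morphisms, hence both $\R G$-isomorphisms by the previous step. Set $\phi := \theta_2^{-1} \circ \theta_1$, which is an $\R G$-endomorphism of $V$; it suffices to prove $\phi = c \cdot \mathrm{id}_V$ for some scalar $c$, as this yields $\theta_1 = c\theta_2$. Picking any eigenvalue $\lambda$ of $\phi$, the map $\phi - \lambda \cdot \mathrm{id}_V$ is another $\R G$-endomorphism of $V$ whose kernel is nonzero, so the first step forces it to vanish, giving $\phi = \lambda \cdot \mathrm{id}_V$.

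The main subtlety, and the only real obstacle, is the eigenvalue step over $\R$: a real endomorphism need not have a real eigenvalue, and in full generality Schur's lemma over $\R$ only shows that the endomorphism ring of an irreducible module is a division algebra (namely $\R$, $\C$, or the quaternions). Here this causes no trouble, since as the paper's footnote observes, the irreducible real representations of the symmetric group remain irreducible over $\C$, so the endomorphism ring is $\R$ and any eigenvalue of $\phi$ is automatically real, completing the argument in the setting actually needed later.
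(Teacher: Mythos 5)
The paper never proves this lemma --- it is stated as background and delegated to the cited references on representation theory --- so there is no internal proof to measure you against; what matters is whether your argument is sound. It is. The first half (kernel and image are $\R G$-submodules, irreducibility forces $\theta$ to be zero or an isomorphism) is the standard argument and is complete. For the second half you correctly identify the one genuine issue: over $\R$, an irreducible module's endomorphism algebra is only guaranteed to be a division algebra ($\R$, $\C$, or the quaternions), so ``unique up to a real scalar'' is false for a general finite group $G$ as literally stated (e.g.\ the two-dimensional rotation representation of a cyclic group has endomorphism algebra $\C$). Your resolution --- invoking the fact, recorded in the paper's own footnote, that the irreps of the symmetric group are absolutely irreducible, so that $\mathrm{End}_{\R G}(V)\otimes_\R\C\cong\mathrm{End}_{\C G}(V\otimes\C)=\C$ and hence $\mathrm{End}_{\R G}(V)=\R\cdot\mathrm{id}$, making the eigenvalue of $\phi=\theta_2^{-1}\circ\theta_1$ real --- is exactly the right way to make the statement true in the setting where the paper uses it (Specht modules and their tensor products for $\S_\qp$). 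One small stylistic remark: once you know the endomorphism algebra is $\R$, the conclusion $\phi=c\,\mathrm{id}_V$ is immediate and the eigenvalue step is redundant; alternatively, you can keep the eigenvalue argument but run it on $V\otimes\C$, using irreducibility there to force $\phi=\lambda\,\mathrm{id}$ and reality of $\phi$ to force $\lambda\in\R$. Either phrasing closes the argument; as written, your appeal to absolute irreducibility does the real work, and that is a legitimate citation of a known fact that the paper itself relies on.
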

Copies of non-isomorphic irreps in a fixed $G$-module $V$ are orthogonal.  
For any $G$-module $V$, one can define its \emph{canonical} decomposition into the direct sum of \emph{isotypic} subspaces, each spanned by all copies of a fixed irrep in $V$. If an isotypic subspace contains at least one copy of the irrep, we say that $V$ {\em uses} this irrep.

If $G$ and $H$ are finite groups, then the irreducible $G\times H$-modules are of the form $V\otimes W$ where $V$ is an irreducible $G$-module and $W$ is an irreducible $H$-module.  And the corresponding group action is given by $(g,h)(v\otimes w) = gv\otimes hw$, with $g\in G$, $h\in H$, $v\in V$, and $w\in W$, which is extended by linearity. 

\paragraph{Symmetric group.}
Let $\bS_L$ denote the \emph{symmetric group} on a finite set $L$, that is, the group with the permutations of $L$ as elements, and composition as the group operation.  If $m$ is a positive integer, $\bS_m$ denotes the isomorphism class of the symmetric groups $\bS_L$ with $|L|=m$.
Representation theory of $\bS_m$ is closely related to \emph{Young diagrams}, defined as follows.

A \emph{partition} $\lambda$ of an integer $m$ is a non-increasing sequence $(\lambda_1,\dots,\lambda_k)$ of positive integers satisfying $\lambda_1+\dots+\lambda_k = m$.  We denote this by $\lambda\vdash m$, or write $m = |\lambda|$.
A partition $\lambda = (\lambda_1,\dots,\lambda_k)$ is often represented in the form of a Young diagram that consists, from top to bottom, of rows of $\lambda_1,\lambda_2,\dots,\lambda_k$ boxes aligned by the left side.
For a partition $\lambda=(\lambda_1,\dots,\lambda_k)$ of $m$ and an integer $\ell\ge \lambda_1$, by $(\ell, \lambda)$ we denote the partition $(\ell,\lambda_1,\dots,\lambda_k)$ of $m+\ell$.

For each partition $\lambda\vdash m$, we assign an irreducible $\bS_m$-module $\specht{\lambda}$, called the \emph{Specht module}.  All these modules are pairwise non-isomorphic, and give a complete list of all the irreps of $\bS_m$.  We describe these modules following the classical approach by Young (see~\cite[Chapter 3]{james:symmetricGroup} or~\cite[\S 28]{curtis:representationTheory} for more detail).

From now on, we assume that aforementioned $L = [m]$ for concreteness.  Assume $\lambda\vdash m$.
A \emph{Young tableau} of shape $\lambda$ is a Young diagram of $\lambda$ with each box containing an integer from $[m]$, each integer used exactly once.
We use $\sh(t)$ for the shape of $t$.
As an example, the following is a Young tableau of shape $(3,2)$:
\[
\young(451,23).
\]
For any Young tableau $t$, we define the group of its \emph{row permutations} $R_t$ and the group of its \emph{column permutations} $C_t$ to consist of the permutations in $\bS_m$ that permute the elements within each row or column of $t$, respectively.
In our example above, $R_t = \bS_{\{1,4,5\}}\times\bS_{\{2,3\}}$ and $C_t = \bS_{\{2,4\}}\times \bS_{\{3,5\}}$. 
In the representation theory of the symmetric group, the elements
\begin{equation}
\label{eqn:Et}
R^+_{t} = \sum_{\pi\in R_{t}}\pi,\qquad
C^-_{t} = \sum_{\rho\in C_{t}}(\sgn\rho)\rho,\qqand
E_t = C^-_{t}R^+_{t},
\end{equation}
of the group algebra $\bC\bS_m$ are widely used.  (Here $\sgn\rho$ stands for the sign of the permutation $\rho$.)
Suitably scaled multiples $\tR^+_t$, $\tC^-_t$, and $\tE_t$ of these operators are non-zero idempotents: 
$(\tR^+_t)^2 = \tR^+_t$, $(\tC^-_t)^2 = \tC^-_t$, and $\tE_t^2 = \tE_t$.
Note that, while $\tR^+_t$ and $\tC^-_t$ are orthogonal projectors, $\tE_t$, in general, is \emph{not} an orthogonal projector.

For any Young tableau $t$, the left ideal $\bC\bS_m E_t$ in the group algebra $\bC\bS_m$ is an irrep of $\bS_m$.  
The isomorphism class of $\bC\bS_m E_t$ only depends on the shape of $t$, and we denote by $\specht\lambda$ an arbitrary representative from this class, where $\lambda = \sh(t)$.
There exists a non-zero vector $v\in\specht\lambda$ such that $\tE_tv=v$ (in $\bC\bS_m E_t$ one can take $v = E_t$).
On the other hand, if $\mu\ne\lambda$ is another partition of $m$, then $\tE_t$ annihilates $\specht\mu$: $\tE_tv = 0$ for all $v\in\specht\mu$.

It  is not hard to see that $\pi R_t^+ = R_{\pi t}^+ \pi$, where $\pi t$ is the Young tableau $t$ with entry $i\in[m]$ replaced by $\pi(i)$.  
As a consequence, the module $\specht\lambda$ is spanned by the images of $R_t^+$ as $t$ runs through all the Young tableau of shape $\lambda$:
$\specht\lambda = \spn\sfig{ R_t^+(\specht\lambda) \mid \sh(t) = \lambda }$.

\section{Analysis of the Problem}
\label{sec:analysis}
Consider the following three types of input strings $x=(x_1,\dots,x_{2n})$ in $[q]^{2n}$:
\itemstart
\item[(a)] For each $a\in[q]$, there are either exactly two or none $i\in[2n]$ such that $x_i = a$.
\item[(b)] For each $a\in[q]$, either there is no $x_i$ equal to $a$, or there is unique $i\in[1..n]$ and unique $j\in[n+1..2n]$ satisfying $x_i=x_j=a$.
\item[(c)] For each $a\in[q]$, there is at most one $i\in[2n]$ such that $x_i=a$.
\itemend
In the \cpp problem, given $x\in[q]^{2n}$ satisfying (a) or (c), the task is to distinguish these two cases.  We say that (a) forms the set of positive inputs, and (c) is the set of negative inputs.  In the \sep problem, the task is to distinguish (b) and (c) in a similar manner.
Since any string in (b) also satisfies the requirements in (a), \sep is a simpler problem than \cpp.
In the following, we will use subscripts $\cp$ and $\se$ to denote relation to \cpp and \sep, respectively.  To avoid unnecessary repetitions, we use notation $\qp$ that may refer to both $\cp$ and $\se$.

Positive inputs in (a) and (b) naturally give rise to the corresponding \emph{matchings}.  A matching $\mu$ on $[2n]$ is a decomposition 
\[
[2n] = \{\mu_{1,1},\mu_{1,2}\} \cup \{\mu_{2,1},\mu_{2,2}\}\cup \cdots\cup \{\mu_{n,1}, \mu_{n,2}\}
\]
of the set $[2n]$ into $n$ mutually disjoint pairs of elements.  For concreteness, we will assume that $\mu_{i,j}$s are sorted: $\mu_{i,1} < \mu_{i,2}$ for all $i\in[n]$ and $\mu_{1,1}<\mu_{2,1}<\cdots<\mu_{n,1}$.  
In particular, $\mu_{1,1}=1$.
Clearly, this assumption is without loss of generality.  
Let $M_\cp$ denote the set of all matchings on $[2n]$, and let $M_\se$ denote the set of matchings $\mu$ on $[2n]$ such that $1\le \mu_{i,1}\le n$ and $n+1\le \mu_{i,2}\le 2n$ for all $i$.

\paragraph{Embedding.}
Our aim is to construct adversary matrices $\Gamma_\cp$ and $\Gamma_\se$ for the \cpp and the \sep problems, respectively. 
As described in \refdefn{advMatrix}, the rows of the adversary matrices are labelled by the positive inputs and the columns by the negative inputs.
We will use the trick initially used in~\cite{belovs:adv-el-dist}, and embed the adversary matrix $\Gamma_\qp$ into a larger $|M_\qp| q^n\times q^{2n}$ matrix $\tilde\Gamma_\qp$. 
The columns of $\tGamma_\qp$ are labelled by all possible inputs in $[q]^{2n}$.  The rows of $\tGamma_\qp$ are split into blocks corresponding to the matchings in $M_\qp$.  
Inside the block corresponding to a matching $\mu\in M_\qp$, there are all possible row labels $x\in[q]^{2n}$ such that $x\elem[\mu_{i,1}]=x\elem[\mu_{i,2}]$ for all $i$.
We will label the rows by specifying both the label and the block, i.e., as $(x, \mu)$.

A {\em legal} label of a row or a column of $\tGamma_\qp$ is one that also features in $\Gamma_\qp$.
Besides legal labels, $\tGamma_\qp$ also contains {\em illegal} labels.  A column is illegal if its label contains two equal elements.  A row corresponding to a matching $\mu$ is illegal if $x\elem[\mu_{i,1}] = x\elem[\mu_{j,1}]$ for some $i\ne j$.  
We obtain $\Gamma_\qp$ from $\tilde\Gamma_\qp$ by removing all the illegal rows and columns.
 
This embedding is used because it is easier to work with $\tilde\Gamma_\qp$ than with $\Gamma_\qp$.  We clearly have
\begin{equation}
\label{eqn:DeltaiNorms}
 \norm|\Delta_i\circ\Gamma_\qp| \leq \big\|\Delta_i\circ\tilde\Gamma_\qp\big\|
\end{equation}
because $\Delta_i\circ\Gamma_\qp$ is a submatrix of $\Delta_i\circ\tilde\Gamma_\qp$.
If we could show that $\norm|\Gamma_\qp|$ is not much smaller than $\|\tilde\Gamma_\qp\|$, that would allow us to use $\tilde\Gamma_\qp$ instead of $\Gamma_\qp$ in \refthm{adversary}. 
This is not true for every choice of $\tilde\Gamma_\qp$, however.
For instance, if $\tilde\Gamma_\qp$ contains non-zero entries only in illegal rows or columns, then
$\Gamma_\qp=0$.
But for our specific choice of $\tilde\Gamma_\qp$ this will be the case, as shown in \refsec{subIllegal}.
The condition $q = \Omega(n^2)$ is essential for the proof.

\paragraph{Space $\cH$ and its subdivision.}
Let $\cH = \C^{q}$, and let $\cM_\qp = \C^{M_\qp}$ be the complex vector space with the elements of $M_\qp$ forming an orthonormal basis.
Then, $\tilde\Gamma_\qp$ can be considered as an operator from $\cH^{\otimes 2n}$ to $\cM_\qp\otimes\cH^{\otimes n}$ if we identify a basis element $(\mu, z)\in M_\qp\times [q]^n$ with the row label $x$ in the $\mu$-block of $\tGamma_\qp$ defined by $x\elem[\mu_{i,a}] = z_i$.

Now we are going to define two bases of $\cH$.  The \emph{standard} basis of $\cH$ is the one used in \rfO(defn:advMatrix), in which the basis vectors correspond to the symbols of the input alphabet.  The {\em $e$-basis} is an orthonormal basis $e_0,e_1,\dots,e_{q-1}$ satisfying $e_0\elem[j]=1/\sqrt{q}$ for all $j\in[q]$.
The precise choice of the remaining basis vectors is irrelevant.  Further in the text, we almost exclusively work in the $e$-basis.
Let
\[
  \cH_0=\spn\{e_0\}\qquad\text{and}\qquad\cH_1 = e_0^\perp = \spn\{e_1,\ldots,e_{q-1}\}.
\]
Let us agree on a notational convention that $\Pi$ with arbitrary sub- and superscripts denotes the orthogonal projector onto the space denoted by $\cH$ with the same sub- and superscripts.
Thus, for instance, $\Pi_0=e_0e_0^*=\J_\cH/q$ and $\Pi_1=\I_\cH-\J_\cH/q$ are the projectors onto $e_0$ and its orthogonal complement, respectively. 
An important relation is
\begin{equation}
\label{eqn:Pis}
\Pi_1 \stackrel{\Delta}{\longmapsto} - \Pi_0,
\end{equation}
where $\Delta$ acts on the only variable.

Similarly, for the space $\cH^{\otimes m}$, the $e$-basis consists of all possible tensor products of the vectors in $\{e_i\}$ of length $m$.  Vector $e_0$ in the tensor product is called the {\em zero component}.  The {\em weight} of the basis vector is the number of non-zero components in the product.  The space $\cH^{\otimes m}$ can be decomposed as $\cH^{\otimes m} = \bigoplus_{k=0}^m \cH^{(m)}_k$, where
\[
  \cH^{(m)}_k=\bigoplus\nolimits_{c\in\{0,1\}^{m}\!, \,|c|=k}\cH_{c_1}\otimes\ldots\otimes\cH_{c_{m}}
\]
is the space spanned by all the basis elements of weight $k$.

\paragraph{Symmetry.}
Let $\bS_\cp=\bS_{[2n]}$ be the symmetric group on $2n$ elements, and let $\bS_\se$ be its subgroup $\bS_{[1..n]}\times \bS_{[n+1..2n]}$.
The \cpp and the \sep problems are invariant under the permutations of the input variables in $\bS_\cp$ and $\bS_\se$, respectively.  Hence, we may assume that $\Gamma_\qp$ is also invariant under these permutations~\cite{hoyer:advNegative}.  
We extend this symmetry to $\tGamma_\qp$ by requiring that, for each $\pi\in\bS_\qp$, and labels $(x,\mu)$ and $y$, we have
\begin{equation}
\label{eqn:symmetry}
\tGamma_\qp\elem[(x,\mu), y] = \tGamma_\qp\elem[(\pi x, \pi\mu), \pi y],
\end{equation}
where $(\pi x)_i = x_{\pi^{-1}(i)}$ and $\pi\mu = \sfigA{\{\pi(\mu_{1,1}), \pi(\mu_{1,2})\},\dots, \{\pi(\mu_{n,1}), \pi(\mu_{n,2})\}}$.  Note that $\cH^{(m)}_k$ is invariant under all the permutations in $\bS_{[m]}$.
Because of this symmetry, we may use the representation theory in the construction of $\tGamma_\qp$.

\section{Dual Learning Graph Perspective} \label{sec:learning}

Our lower bounds for the \cpp and the \sep problems are intrinsicly based on the dual learning graph for these problems developed in~\cite{belovs:onThePower}.  In this section, we explain the dual learning graph and how it relates to the adversary bound.

The concept of learning graphs is based on \emph{certificate structures}.  Informally, they describe all possible dispositions of certificates in positive inputs.  For \cpp and \sep, the corresponding certificate structure can be identified with the set of matchings $M_\qp$.
For each positive input $x$, there exists a matching $\mu\in M_\qp$ that pairs up equal elements.  That is, $x_a = x_b$ for all $\{a,b\}\in \mu$.
A subset $S\subseteq[2n]$ is a 1-certificate for $x$ if and only if it contains a pair from $\mu$ as a subset (that is, $2^S\cap\mu\ne\emptyset$).
Let us denote the latter relationship by $S\sim\mu$.

The (dual) learning graph complexity of the certificate structure is the optimal value of the following optimisation problem (the formulation is tailored to the case of \cpp and \sep): 
\begin{subequations}
\label{eqn:learningDual}
\begin{alignat}{3}
 &{\mbox{\rm maximise }} &\quad& \sqrt{\sum\nolimits_{\mu\in M_\qp} \alpha_\mu(\emptyset)^2} \label{eqn:alphaObjective} \\ 
 &{\mbox{\rm subject to }} && \sum_{\mu\in M_\qp} \sA[\alpha_{\mu}(S) - \alpha_\mu(S\cup\{j\})]^2\le 1 &\quad&\text{\rm for all $S\subseteq[2n]$ and $j\in[2n]\setminus S$;} \label{eqn:alphaOne} \\
 &&& \alpha_\mu(S) = 0 && \text{\rm whenever $S\sim\mu$;} \label{eqn:alphaZero}  \\
 &&& \alpha_\mu(S)\in\bR && \text{\rm for all $S\subseteq[2n]$ and $\mu\in M_\qp$.}
\end{alignat}
\end{subequations}
A \emph{dual learning graph} is any feasible solution to this optimisation problem. 
Let us note that one can also define the primal learning graph complexity, which is equal to the dual one, but we do not use it in this paper.

In some sense, learning graphs precisely capture quantum query complexity of certificate structures.  
First, for any decision problem with a given certificate structure, there exists a quantum algorithm for this problem with query complexity equal (up to a constant factor) to the learning graph complexity of the certificate structure.  
Second, for any certificate structure, there exists a decision problem possessing this certificate structure and whose quantum query complexity is equal (up to a constant factor) to the learning graph complexity of the certificate structure.

For our problems, we have the following

\begin{prp}
\label{prp:learning}
The dual learning graph complexity of the \cpp and the \sep problems is $\Omega(n^{1/3})$.
\end{prp}

\pfstart
Define the following potential solution to~\rf{eqn:learningDual}:
\begin{equation}
\label{eqn:alphamuS}
\alpha_\mu(S) =
\begin{cases}
\frac1{\sqrt{|M_\qp|}} \max\sfigA{ n^{1/3}-|S|, 0 }, & \text{if $S\not\sim \mu$;}\\
0,& \text{otherwise.}
\end{cases}
\end{equation}
It is easy to see that the objective value is $n^{1/3}$.  Let us prove its feasibility (up to a constant factor).
Fix $S\subseteq[2n]$ and $j\notin S$.  If $|S|\ge n^{1/3}$, the left-hand side of~\rf{eqn:alphaOne} is 0, so let us further assume $|S|< n^{1/3}$.
For each $\mu$, the difference $\alpha_\mu(S) - \alpha_\mu(S\cup\{j\})$ can take the following values:
\begin{itemize}
\item If $S\sim\mu$, then $\alpha_\mu(S) = \alpha_\mu(S\cup\{j\})=0$, and the difference is 0.
\item If $S\cup\{j\}\not\sim\mu$, then the difference is $1/\sqrt{|M_\qp|}$.
\item Finally, if $S\not\sim\mu$ and $S\cup\{j\}\sim\mu$, then the difference can be as large as $n^{1/3}/\sqrt{|M_\qp|}$.  However, this only happens if $j$ is matched by $\mu$ to an element already in $S$, and at most $|S|/n\le n^{-2/3}$ fraction of all matchings $\mu$ satisfy this condition.
\end{itemize}
Thus, the value of the left-hand side of~\rf{eqn:alphaOne} is at most
\[
|M_\qp|\cdot\frac1{|M_\qp|} + n^{-2/3} \cdot |M_\qp| \cdot \frac{n^{2/3}}{|M_\qp|} = \OO(1).
\]
By scaling down the solution by a constant factor, we obtain a feasible solution with objective value $\Omega(n^{1/3})$.
\pfend

Now we would like to convert this dual learning graph into an adversary lower bound.
In order to get some intuition, let us first consider the following $(M_\qp\times [q]^{2n})\times [q]^{2n}$ matrices.  Their rows are labelled by pairs $(\mu, x)$, where $\mu\in M_\qp$ and $x$ is an \emph{arbitrary} string in $[q]^{2n}$.  And their columns are labelled by strings in $[q]^{2n}$.
We treat such matrices as a $M_\qp\times 1$ block matrices, where each block is an $[q]^{2n}\times [q]^{2n}$ matrix.

\mycommand{hGamma}{\hat\Gamma}
Let $\hGamma$ be one such matrix whose $\mu$-th block is given by $\sum_S \alpha_\mu(S) \Pi_S$.
Here $\Pi_S$ is defined as $\bigotimes_{j\in[2n]} \Pi_{s_j}$, where $s_j = 1$ if $j\in S$ and $s_j = 0$ otherwise, and $\Pi_0$ and $\Pi_1$ are defined right before relation~\rf{eqn:Pis}.
Using~\rf{eqn:Pis}, we can define
\[
\Delta_j \diamond \Pi_S=
\begin{cases}
-\Pi_{S\setminus\{j\}},&\text{if $j\in S$;}\\
\Pi_S,&\text{otherwise.}
\end{cases}
\]
Since $\Pi_S$ are pairwise orthogonal projectors, we have
\[
\|\hGamma\|^2 = \max_S \sum_{\mu\in M_\qp} \alpha_\mu(S)^2
\qqand
\|\Delta_j\diamond\hGamma \|^2 = \max_{S\not\ni j} \sum_{\mu\in M_\qp} 
\sA[\alpha_{\mu}(S) - \alpha_\mu(S\cup\{j\})]^2.
\]
This is the intuition behind the expressions in~\rf{eqn:alphaObjective} and~\rf{eqn:alphaOne}.

Now we would like to switch from $\hGamma$ to $\tGamma$.
The simplest solution, adapted in~\cite{belovs:onThePower}, is to restrict $\hGamma$ to the rows used in $\tGamma$ and scale each block up, so that the norm of the thus transformed matrix $\Pi_\emptyset$ is still 1 in each block.
However, the norm of $\hGamma$ (and consequently, $\Delta_j\diamond\hGamma$) may grow after this transformation.  For example, consider a simple case of $n=1$ and two matrices
\[
\Pi_\emptyset = \Pi_0\otimes \Pi_0
\qqand
\Pi_{\{1\}} + \Pi_{\{2\}} = \Pi_1\otimes \Pi_0 + \Pi_0\otimes \Pi_1.
\]
The norms of both matrices are 1.  After the transformation, these matrices become
\begin{equation}  
\label{eqn:Psis}
\Psi_0=\Pi_0\otimes e_0^*=e_0^*\otimes \Pi_0
\qqand
\Psi_1=\Pi_1\otimes e_0^*+e_0^*\otimes \Pi_1.
\end{equation}
The norm of the first matrix is still 1, whereas the norm of the second is $\sqrt2$.  This growth in norm occurs essentially because two initially orthogonal left-singular vectors $e_i\otimes e_0$ and $e_0\otimes e_i$ collapse into one vector $e_i$, for any $e_i$ orthogonal to $e_0$.

Ideally, we would like the norm of $\Delta_j\diamond\hGamma$ not to grow more than by a constant factor.
The condition~\rf{eqn:alphaZero} is essential for this. 
In the lower bound constructions of~\cite{belovs:onThePower}, the decision problems possessing a given certificate structure were chosen in such a way that the norm did not grow much for any choice of $\alpha_\mu(S)$ satisfying~\rf{eqn:alphaZero}.
However, here we do not have a luxury of choosing the problems: we have to consider \cpp and \sep.  We will see in \rf{sec:subconst1} that the same simple restriction does not work for these problems.
In order to control the growth of the norm, we will have to switch to different operators, which we describe in \rf{sec:success}.

\section{Construction of the Adversary Matrix} \label{sec:const}
The matrix $\tilde\Gamma_\qp$ is constructed as a linear combination
\begin{equation}
\label{eqn:decomposition}
   \tilde\Gamma_\qp = \sum\nolimits_{k}\alpha_k\bar \oW_{\qp,k},
\end{equation}
where, for each $k$, $\bar W_{\qp,k}$ is an operator from $\cH^{(2n)}_k$ to $\cM_\qp\otimes \cH^{(n)}_k$.
The coefficients $\alpha_k$ are given by $\alpha_k = \max\{0, n^{1/3}-k\}$.
We again assume that $\bar \oW_{\qp,k}$ are invariant under the action of $\bS_\qp$ (in the sense of \rfO(eqn:symmetry)).

In terms of \rf{sec:learning}, one should think of $\bar W_{\qp,k}$ as corresponding to a combination of matrices $\Pi_S$ with $|S|=k$.
As mentioned in the previous section, the first intention is to use the techniques of~\cite{belovs:onThePower}
to construct the constituent matrices in~\rfO(eqn:decomposition). Unfortunately, this does not work, as we show in \rfO(sec:subconst1).
Luckily, it is possible to modify the construction of the matrices so that~\rfO(eqn:decomposition) gives an optimal adversary matrix.  We describe this in \rfO(sec:success).  Finally, in \rfO(sec:subIllegal), we show how to transform $\tGamma_\qp$ into a valid adversary matrix $\Gamma_\qp$.

\subsection{First Attempt} \label{sec:subconst1}
In this section, we define matrices $W_{\qp,k}$ that may seem as the most natural choice for the decomposition~\rfO(eqn:decomposition).  Unfortunately, they do not work well enough, so we will have to modify the construction in \rfO(sec:success).

As described in \rfO(sec:analysis),
the matrix $\tilde\Gamma_\qp$ can be decomposed into blocks corresponding to different matchings $\mu\in M_\qp$.  
We first define one block of the matrix.
Recall the operators $\Psi_0,\Psi_1\colon\cH^{\otimes2}\rightarrow\cH$ defined in~\rf{eqn:Psis}.
For every $k\in[0..n]$ and every $\mu\in M_\qp$, define the operator $W^\mu_k\colon \cH^{\otimes 2n}\to \cH^{\otimes n}$ by 
\begin{equation}\label{eqn:Fsigma}
 W^\mu_k = \sum\nolimits_{c\in\{0,1\}^n\!, \,|c|=k}\Psi_{c_1}\otimes\ldots\otimes\Psi_{c_n},
\end{equation}
where, for $i\in[n]$, $\Psi_{c_i}$ maps the $\mu_{i,1}$-th and the $\mu_{i,2}$-th multiplier in $\cH^{\otimes 2n}$ to the $i$-th multiplier in $\cH^{\otimes n}$.
Note that the image of $W^\mu_k$ is contained in $\cH^{(n)}_k$ and its coimage in $\cH^{(2n)}_k$, that is, $W^\mu_k = \Pi^{(n)}_k W^\mu_k \Pi^{(2n)}_k$.
The block of the matrix $W_{\qp, k}$ corresponding to $\mu\in M_\qp$ is defined by $\frac{1}{\sqrt{|M_\qp|}} W_k^\mu$.
We have $W_{\qp, k} = W_{\qp, k} \Pi^{(2n)}_k$.

Now, if we define $\tGamma_\qp$ as in~\rf{eqn:decomposition} with $\bar W_{\qp,k} = W_{\qp,k}$, we obtain the same matrix we would have obtained using the construction in \rf{sec:learning} with $\alpha_\mu(S)$ given by~\rf{eqn:alphamuS}.
One can see that $\oW_{\qp,k}$ thus constructed satisfy the symmetry~\rfO(eqn:symmetry).
Because of this, $\|\Delta_i\circ\tilde\Gamma_\qp\|$ is the same for all $i\in[2n]$.  Therefore, it suffices to estimate $\|\Delta_1\circ\tilde\Gamma_\qp\|$.
For that, we use the following simple decomposition
\begin{equation}
\label{eqn:PiDecomposition}
\Pi^{(2n)}_k = \Pi_0\otimes \Pi^{(2n-1)}_{k} + \Pi_1\otimes \Pi^{(2n-1)}_{k-1}.
\end{equation}
It is not hard to check that the $\mu$-th block of $W_{\qp,k}(\Pi_0\otimes \Pi^{(2n-1)}_{k})$ is given by $\frac{1}{\sqrt{|M_\qp|}}(X_k^\mu + Y_k^\mu)$ and the $\mu$-th block of $W_{\qp,k}(\Pi_1\otimes \Pi^{(2n-1)}_{k-1})$ is given by $\frac{1}{\sqrt{|M_\qp|}}Z_k^\mu$, where (with the same order of multipliers as in~\rf{eqn:Fsigma}):
\begin{equation}
\label{eqn:GHI}
\begin{split}
& X^{\mu}_{k} = \Psi_0\otimes \sum\nolimits_{c\in\{0,1\}^{2..n},\,|c|=k}\Psi_{c_2}\otimes\ldots\otimes \Psi_{c_{n}}, \\
& Y^{\mu}_{k} = (e_0^*\otimes \Pi_1)\otimes \sum\nolimits_{c\in\{0,1\}^{2..n},\,|c|=k-1}\Psi_{c_2}\otimes\ldots\otimes \Psi_{c_{n}}, \\
& Z^{\mu}_{k} = (\Pi_1\otimes e_0^*)\otimes \sum\nolimits_{c\in\{0,1\}^{2..n},\,|c|=k-1}\Psi_{c_2}\otimes\ldots\otimes \Psi_{c_{n}}.
\end{split}
\end{equation}
Here we used that $\mu(1,1)=1$, and also~\rf{eqn:Psis}.
Thus, if we define 
$\oX_{\qp,k}$, $\oY_{\qp,k}$, and $\oZ_{\qp,k}$ similarly to $W_{\qp, k}$, we get the following decomposition:
\begin{equation}
\label{eqn:Wdecomposition}
\oW_{\qp,k}=\oX_{\qp,k}+\oY_{\qp,k}+\oZ_{\qp,k}.
\end{equation}
Again, one can see that $\oX_{\qp,k}$, $\oY_{\qp,k}$, and $\oZ_{\qp,k}$
are symmetric under the action of $\bS'_\qp$, where $\bS'_\cp=\bS_{[2..2n]}$ and $\bS'_\se=\bS_{[2..n]}\times \bS_{[n+1..2n]}$. 
Using~\rf{eqn:Pis},
it is reasonable to define $\Delta_1\diamond \oX_{\qp,k}= \oX_{\qp,k}$,\; $\Delta_1\diamond \oY_{\qp,k}= \oY_{\qp,k}$,\; and $\Delta_1\diamond \oZ_{\qp,k}= -\oX_{\qp,k-1}$,
so that
\begin{equation}
\label{eqn:firstCancellation}
\tilde\Gamma_\qp \stackrel{\Delta_1}{\longmapsto} \sum\nolimits_k (\alpha_{k-1}-\alpha_k) \oX_{\qp,k-1} + \sum\nolimits_k \alpha_k \oY_{\qp,k} .
\end{equation}
This construction restates that of \rf{sec:learning}, where
the first and the second terms of this relation correspond to the second and the third bullets in the proof of \rf{prp:learning}, respectively.
In order to maintain a meaningful lower bound, we would need the norms of $W_{\qp,k}$ and $X_{\qp,k}$ to be $\OO(1)$, and the norm of $Y_{\qp, k}$ to be $\OO(\sqrt{k/n})$.  
However, in reality, it is not hard to show that
\begin{equation}
\label{eqn:exponentialNorm}
 \|W_{\qp, k}\| = \Theta(2^{k/2}),\qquad
\|X_{\qp, k}\| = \Theta(2^{k/2}),\quad\text{and}\quad\|Y_{\qp,k}\| = \Theta(2^{k/2}\sqrt{k/n}).
\end{equation}
This growth by the factor of $2^{k/2}$ can be interpreted as the $\sqrt2$ growth in~\rf{eqn:Psis} taken to the $k$-th power.  And this construction fails to give anything better than the trivial lower bound.

There is an explanation for this failure.
The construction above only used that the learning graph complexity of the \cpp problem is $\Omega(n^{1/3})$.  On the other hand, the learning graph complexity of the hidden shift problem is also $\Omega(n^{1/3})$~\cite[Proposition 12]{belovs:onThePower}.  Thus, if the current construction had worked, we would also have proven an $\Omega(n^{1/3})$ lower bound for the hidden shift problem contradicting the fact that the query complexity of this problem is logarithmic.
Thus, in order to obtain an optimal solution, we have to use again the structure of the problem.

\newcommand{\bbcH}{\bar{\bar\cH}}
\newcommand{\bbPi}{\bar{\bar\Pi}}

\subsection{Successful Construction}
\label{sec:success}
Our aim is to get rid of the $2^{k/2}$ factor in~\rfO(eqn:exponentialNorm) while preserving an analogue of~\rfO(eqn:firstCancellation).
Recall that $W^\mu_k = W^\mu_k \Pi^{(2n)}_k$.
As we will show in \rf{cor:Hm}, for $m$ a positive integer, $\cH^{(m)}_k$, as an $\bS_m$-module, uses irreps whose Young diagrams have at most $k$ boxes below the first row.
Let us define $\bar\cH^{(m)}_k$ as the subspace of $\cH^{(m)}_k$ spanned by the irreps of $\bS_m$ having \emph{exactly} $k$ boxes below the first row, i.e., of the form $(m-k,\lambda)$ with $\lambda\vdash k$.
We restrict each $W^\mu_k$ to this subspace, or, more precisely, we define
\begin{equation}\label{eqn:Wbar}
\bar \oW_{\qp,k}=\oW_{\qp,k}\bar\Pi_{\qp,k},
\end{equation}
where $\bar \Pi_{\qp,k}$ is the orthogonal projector on one of the following subspaces:
\begin{equation}\label{eq:barPiQP}
\bar\cH_{\cp,k}=\bar\cH_{k}^{(2n)} \qquad \text{or} \qquad \bar\cH_{\se,k}=\sum\nolimits_{\ell =0}^k\bar\cH_{k-\ell}^{(n)}\otimes \bar\cH_{\ell}^{(n)}.
\end{equation}
Here, for $\bar\cH_{\se, k}$, the first and the second multiplier reside in the first $n$ and the second $n$ copies of $\cH$ in $\cH^{\otimes 2n}$, respectively.

While applying $\Delta_1$ in \rf{sec:subconst1}, we used~\rf{eqn:PiDecomposition}, which effected to
\[\Pi^{(m)}_k = \Pi_0\otimes \Pi^{(m-1)}_{k} + \Pi_1\otimes \Pi^{(m-1)}_{k-1}.\]  
Now we would like to have a similar decomposition for $\bar\Pi^{(m)}_k$.  Unfortunately, this time there is a non-zero error term
\begin{equation}
\label{eqn:error}
\Phi_k^{(m)} = \bar\Pi_k^{(m)} - \Pi_0\otimes \bar\Pi_k^{(m-1)} - \Pi_1\otimes \bar\Pi_{k-1}^{(m-1)}.
\end{equation}
Note that $\Phi_k^{(m)}$ is a normal operator with image (and coimage) contained in $\cH^{(m)}_k$ and it is symmetric with respect to $\bS_{m-1}' = \bS_{[2..m]}$.
Luckily, in \rf{sec:newMainProof}, we will be able to bound it as follows:
\begin{lem}
\label{lem:newMain}
If $k < m/3$, then 
$\normA|\Phi_k^{(m)}| =  \mathrm{O}(1/\sqrt{m})$.  Moreover, the image of $\Phi_k^{(m)}$ only uses irreps of $\bS_{m-1}'$ with exactly $k-1$ boxes below the first row.
\end{lem}

With $\Phi^{(m)}_k$ as in~\rf{eqn:error}, let us define
\begin{equation}
\label{eqn:barHprime}
\begin{aligned}
  \bar\Pi_{\cp,k}' &=\bar\Pi_{k}^{(2n-1)},&
  \Phi_{\cp,k}&=\Phi_{k}^{(2n)},
  \\
  \bar\Pi_{\se,k}'&=\sum\nolimits_{\ell=0}^k(\bar\Pi_{k-\ell}^{(n-1)}\otimes \bar\Pi_{\ell}^{(n)}),&
  \Phi_{\se,k}&=\sum\nolimits_{\ell=0}^{k-1}(\Phi^{(n)}_{k-\ell}\otimes \bar\Pi_{\ell}^{(n)}).
\end{aligned}
\end{equation}

Note that 
$\Phi_{\qp,k}$ acts on $\cH^{\otimes2n}$ while $\bar\Pi'_{\qp,k}$ acts on $\cH^{\otimes(2n-1)}$.
From~\rf{eqn:error}, we have
\begin{equation}
\label{eqn:PiQkDecomposition}
 \bar\Pi_{\qp,k} = \Pi_0\otimes \bar\Pi'_{\qp,k} + \Pi_1\otimes \bar\Pi'_{\qp,k-1} + \Phi_{\qp,k}.
\end{equation}

With $X_{\qp,k}$, $Y_{\qp, k}$ and $Z_{\qp, k}$ as in \rfO(sec:subconst1), let
\[
 \bar \oX_{\qp,k}=\oX_{\qp,k}(\Pi_0\otimes\bar\Pi'_{\qp,k}),\qquad 
 \bar \oY_{\qp,k}=\oY_{\qp,k}(\Pi_0\otimes\bar\Pi'_{\qp,k}),\quad\text{and}\quad 
 \bar \oZ_{\qp,k}=\oZ_{\qp,k}(\Pi_1\otimes\bar\Pi'_{\qp,k-1}),
\]
so that from~\rfO(eqn:Wdecomposition) 
and~\rfO(eqn:PiQkDecomposition) we get
\[
 \bar \oW_{\qp,k}= \oW_{\qp,k} \bar\Pi_{\qp,k} = \bar\oX_{\qp,k}+\bar\oY_{\qp,k}+\bar\oZ_{\qp,k} + \oW_{\qp,k}\Phi_{\qp,k}. 
\]

We define the action of $\Delta_1$ on these operators by
\[
\bar \oX_{\qp,k} \stackrel{\Delta_1}{\longmapsto} \bar \oX_{\qp,k},\qquad
\bar \oY_{\qp,k} \stackrel{\Delta_1}{\longmapsto} \bar \oY_{\qp,k},\qquad
\oW_{\qp,k}\Phi_{\qp,k} \stackrel{\Delta_1}{\longmapsto} \oW_{\qp,k}\Phi_{\qp,k},
\quad\mbox{and}\quad
\bar \oZ_{\qp,k} \stackrel{\Delta_1}{\longmapsto} - \bar \oX_{\qp,k-1}.
\]
It is not hard to check that this definition satisfies the requirements of \rfO(sec:adv).  Thus, for $\tilde\Gamma_\qp$ as defined in \rfO(eqn:decomposition), we have
\[
\tilde\Gamma_\qp \stackrel{\Delta_1}{\longmapsto} \sum\nolimits_k (\alpha_{k-1}-\alpha_k) \bar \oX_{\qp,k-1} + \sum\nolimits_k \alpha_k \bar \oY_{\qp,k} +  \sum\nolimits_k \alpha_k \oW_{\qp,k}\Phi_{\qp,k}. 
\]
So far we have merely constructed an analogue of~\rfO(eqn:firstCancellation).  The main difference between this construction and the one in \rfO(sec:subconst1) is given by the following result, which we prove in \rf{sec:normEstimationsProof} (note the difference with~\rfO(eqn:exponentialNorm)):
\begin{lem}
\label{lem:normEstimations}
In the above notations, we have:
\[
(a)\quad \|\bar X_{\qp,k} \|\le 1,\qquad
(b)\quad \|\bar Y_{\qp,k} \| = \OO(\sqrt{k/n}),\qquad
(c)\quad \|W_{\qp,k}\Phi_{\qp,k} \| = \OO(1/\sqrt{n}).
\]
\end{lem}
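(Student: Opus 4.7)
My plan is to prove all three bounds by a common reduction: pass from the rectangular operator to its self-adjoint square on $\cH^{\otimes 2n}$. The rows of $\bar X_{\qp,k}$ split into orthogonal $\mu$-blocks, so a direct calculation gives
\[
\bar X_{\qp,k}^*\bar X_{\qp,k}=(\Pi_0\otimes\bar\Pi'_{\qp,k})\,T^X_k\,(\Pi_0\otimes\bar\Pi'_{\qp,k}),\qquad T^X_k:=\frac{1}{|M_\qp|}\sum\nolimits_{\mu\in M_\qp}(X^\mu_k)^*X^\mu_k,
\]
and analogous formulas hold for $\bar Y_{\qp,k}$ and for $W_{\qp,k}\bbPi_{\qp,k}$. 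The averaging over matchings makes $T^X_k$ invariant under the subgroup $\S'_\qp$ acting on positions $[2..2n]$, so Schur's lemma reduces each norm bound to computing a scalar on each $\S'_\qp$-isotypic component contained in $\bar\cH'_{\qp,k}$ (or in $\bbcH_{\qp,k}$ for part~(c)).

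For part~(a), the orthogonality of the images of $\Psi_0$ and $\Psi_1$ kills all cross terms, giving
\[
(X^\mu_k)^*X^\mu_k=(\Pi_0\otimes\Pi_0)_{[1,\mu_{1,2}]}\otimes\sum\nolimits_{|c|=k}\bigotimes\nolimits_{i\geq 2}(\Psi_{c_i}^*\Psi_{c_i})_{[\mu_{i,1},\mu_{i,2}]}.
\]
For $k=0$ this immediately yields $T^X_0=\Pi_0^{\otimes 2n}$, matching $(\Pi_0\otimes\bar\Pi'_{\qp,0})=\Pi_0^{\otimes 2n}$ on the nose, so $\|\bar X_{\qp,0}\|=1$. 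For general $k$ I would evaluate the Schur scalar of $T^X_k$ on each irrep $\specht{(2n-1-k,\lambda)}$ inside $\bar\cH'_{\qp,k}$ by testing on a convenient highest-weight vector. The crucial cancellation that suppresses the $2^{k/2}$ blow-up of $\|W_{\qp,k}\|$ comes from the projection $\bar\Pi'_{\qp,k}$: on the ``many boxes below the first row'' irreps, the $2^k$ $c$-terms of the sum are locked into a single symmetric combination whose squared length exactly matches $|M_\qp|$, which the normalization $1/|M_\qp|$ cancels.

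For part~(b), $Y^\mu_k$ differs from $X^\mu_k$ only in the first pair, where $e_0^*\otimes\Pi_1$ replaces $\Psi_0$ and forces position $\mu_{1,2}$ into $\cH_1$. Vectors in $\bar\cH'_{\qp,k}$ have a weight-$1$ density of order $k/(2n)$ across positions $[2..2n]$, and averaging over $\mu$ turns this combinatorial fraction into a spectral factor of order $k/n$ in $T^Y_k$; taking square roots gives the claimed $\OO(\sqrt{k/n})$ bound. For part~(c), write $W_{\qp,k}\Phi_{\qp,k}=(W_{\qp,k}\bbPi_{\qp,k})\Phi_{\qp,k}$; by \reflem{newMain} together with the orthogonal supports of the summands in~\refeqn{barHprime}, $\|\Phi_{\qp,k}\|=\OO(1/\sqrt n)$, so it suffices to show $\|W_{\qp,k}\bbPi_{\qp,k}\|=\OO(1)$. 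This is proved exactly as in~(a), but with $\bbcH_{\qp,k}$ (the $(k-1)$-boxes-below irreps) in place of $\bar\cH_{\qp,k}$.

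The main obstacle is the Schur-lemma computation: identifying which $\S'_\qp$-irreps actually appear in $\bar\cH'_{\qp,k}$ and $\bbcH_{\qp,k}$, and computing the scalar of $T^X_k$ (and of $T^W_k$) on each. This requires a careful use of the branching rule $\S_m\supset\S_{m-1}$ together with its product-group analogue for the \sep case, and the reason the computation works out is that the hypothesis $k=o(\sqrt n)$ keeps all the relevant partitions of the form ``long first row plus a tiny tail'', where the combinatorial coefficients are explicitly computable and the scalars turn out bounded rather than exponential in $k$.
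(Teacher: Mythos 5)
Your high-level frame matches the paper's: both arguments exploit the $\S'_\qp$-symmetry (via Schur's lemma) to reduce to a single irreducible module of top right-singular vectors, both obtain (b) from (a) by observing that only an $\OO(k/n)$ fraction of matchings can pair position $1$ with a position carrying a non-zero component, and both reduce (c) to $\|W_{\qp,k}\bbPi_{\qp,k}\|=\OO(1)$ using $\|\Phi^{(m)}_k\|=\OO(1/\sqrt m)$ from \reflem{newMain}. But the core of the lemma --- why restricting to irreps of the form $(m-k,\lambda)$ kills the $2^{k/2}$ growth in \refeqn{exponentialNorm} --- is exactly the step you defer (``evaluate the Schur scalar \ldots by testing on a convenient highest-weight vector''), and the mechanism you offer for it is not right. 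The $1/|M_\qp|$ normalization plays no role in that cancellation; it only compensates for the number of $\mu$-blocks, and the paper in fact proves the stronger \emph{per-matching} bound $\|X^\mu_k v\|\le\|v\|$ for a suitable $v$ in the module. (Also, on an isotypic component of multiplicity greater than one your averaged operator $T^X_k$ is not a scalar, so the reduction needs the extra step of passing to the top eigenspace, which is a submodule, and choosing an irreducible inside it.)

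The missing idea is the following. In each irrep $\specht{(m-k,\lambda)}$ one can exhibit (\reflem{kappa}) a non-zero vector $v$ fixed by $\kappa=2^{-k}(\eps-(a_1,b_1))\cdots(\eps-(a_k,b_k))$ for $k$ disjoint transpositions; since $v$ also has weight exactly $k$, every $e$-basis vector in its support carries exactly one non-zero component in each pair $\{a_i,b_i\}$. The pairs together with the matching $\mu$ then form a graph of maximum degree $2$, i.e., a disjoint union of paths and cycles, and a preimage-counting argument along these components (each edge-labelling in the range of $W^\mu_k$ has $0$ or $2^c$ preimages, where $c$ is the number of cycles, while the antisymmetry within each pair kills at least half of the surviving terms per cycle) yields $\|X^\mu_k v\|\le\|v\|$; part (c) needs a variant with $k-1$ transpositions and a ``special path'' contributing an extra bounded factor. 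Without some substitute for this combinatorial step --- a concrete vector in the irrep on which the $2^k$ sub-terms arising from $\Psi_1=\Pi_1\otimes e_0^*+e_0^*\otimes\Pi_1$ provably do not add constructively --- your proposal does not establish (a), and hence neither (b) nor (c).
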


With this result, it is not hard to  show that $\alpha_k=\max\{n^{1/3}-k,0\}$ is a good choice for the values of $\alpha_k$ in the decomposition~\rfO(eqn:decomposition). 
Indeed, for different $k$, all the operators $\bar \oX_{\qp,k}$ are mutually orthogonal, and the same is true for $\bar \oY_{\qp,k}$ and $\oW_{\qp,k}\Phi_{\qp,k}$.
Hence, the following conditions ensure that $\|\Delta_1\diamond \tilde\Gamma_\qp\| = \OO(1)$:
\[
|\alpha_{k-1}-\alpha_k|\leq 1,\qquad
|\alpha_k|\leq \sqrt{n/k},
\quad \text{and} \quad 
|\alpha_k|\le \sqrt{n}
\]
for all $k$.  
Our choice $\alpha_k=\max\{n^{1/3}-k,0\}$ satisfies these conditions, giving us
\[
\big\|\tilde\Gamma_{\qp}\big\|  \geq
\big\|\alpha_0\bar\oW_{\qp,0}\big\|  =
\alpha_0=n^{1/3}.
\]

\subsection{Removal of Illegal Rows and Columns} \label{sec:subIllegal}
So far we have only constructed the matrix $\tilde\Gamma_\qp$ in which the actual adversary
 matrix $\Gamma_\qp$ is embedded.
We obtain $\Gamma_\qp$ by deleting all the illegal rows and columns of $\tilde\Gamma_\qp$.
By~\rfO(eqn:DeltaiNorms), we already have that $\|\Gamma_\qp\circ\Delta_i\| = \OO(1)$ for all $i$.
It remains to show that $\|\Gamma_\qp\|$ is not much smaller than $\|\tilde\Gamma_\qp\|$, that is, not much smaller than $\alpha_0$.
Let us assume that $q\in\Omega(n^2)$, so that a constant ratio of rows and columns of $\tilde\Gamma_\qp$ remain in $\Gamma_\qp$ (that is, they are legal).

We have $\Gamma_\qp=\sum_k\alpha_k\check \oW_{\qp,k}$, where  $\check \oW_{\qp,k}$ is obtained form $\bar \oW_{\qp,k}$ by deleting all the illegal rows and  columns. 
In particular, since $\bar \oW_{\qp,0}=\oW_{\qp,0}$ is the matrix of all entries equal and $\|\bar \oW_{\qp,0}\|=1$, we have $\|\check \oW_{\qp,0}\|=\Omega(1)$ and its principal right-singular vector is the all-ones vector $\mathbf1$ of length $q!/(q-2n)!$.  All that is left to show is that $\check \oW_{\qp,k} \mathbf1=0$ whenever $k\neq 0$.

The coimage of $\bar W_{\qp, k}$ is contained in $\bar\cH_{\qp, k}$.  Let $\cL$ be the domain of $\check \oW_{\qp,k}$, which is spanned by the standard basis vectors corresponding to the legal negative inputs. 
Note that $\cL$ is a submodule of $\cH^{\otimes 2n}$, which is the domain of $\bar W_{\qp,k}$.  Hence, the coimage of $\check \oW_{\qp,k}$ is contained in the span of the irreps of $\bS_{\qp}$ used in $\bar\cH_{\qp, k}$.  
The vector $\mathbf1$ is contained in the trivial irrep of $\bS_{\qp}$ ($\specht{(2n)}$ for \cpp and $\specht{(n)}\otimes\specht{(n)}$ for \sep), and hence it is orthogonal to the coimage of $\check \oW_{\qp,k}$.  Thus, $\check \oW_{\qp,k} \mathbf1=0$ for $k>0$.
This gives us the main theorem of the paper.
\begin{thm}\label{thm:main}
For both $\qp\in\{\cp,\se\}$, let
\[
\tilde\Gamma_\qp=\sum\nolimits_{k=0}^{n^{1/3}}(n^{1/3}-k)\bar\oW_{\qp,k},
\]
where $\bar\oW_{\qp,k}$ is defined in \refeqn{Wbar}, and let $\Gamma_\qp$ be obtained from
$\tilde\Gamma_\qp$ by removing all the illegal rows and columns. 
Given that $q\in\Omega(n^2)$, $\Gamma_\cp$ and 
$\Gamma_\se$ are adversary matrices for \cpp and \sep, respectively,
giving an $\Omega(n^{1/3})$ lower bound on the quantum query complexity of both problems.
\end{thm}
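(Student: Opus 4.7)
The plan is to combine the pieces already assembled in Section~\ref{sec:const} into a clean verification of the two adversary bound requirements: a lower bound on $\|\Gamma_\qp\|$ and an $\OO(1)$ upper bound on $\|\Delta_i \circ \Gamma_\qp\|$ for each variable $i$. Since the construction is $\S_\qp$-symmetric, it suffices to handle $i=1$; all other $i$ follow by relabelling via~\rf(eqn:symmetry).

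First I would verify the constraint side. Starting from the decomposition $\tilde\Gamma_\qp=\sum_k\alpha_k\bar\oW_{\qp,k}$ and applying the $\Delta_1$-action defined before \rf(lem:normEstimations), the contributions from $\bar\oZ_{\qp,k}$ telescope against those of $\bar\oX_{\qp,k-1}$, leaving
\[
 \tilde\Gamma_\qp \stackrel{\Delta_1}{\longmapsto} \sum\nolimits_k(\alpha_{k-1}-\alpha_k)\bar\oX_{\qp,k-1} + \sum\nolimits_k\alpha_k\bar\oY_{\qp,k} + \sum\nolimits_k\alpha_k\oW_{\qp,k}\Phi_{\qp,k}.
\]
The three families of operators have pairwise orthogonal supports in $k$ (and pairwise orthogonal ranges between families, since the range of $\bar\oX_{\qp,k}$ sits in $\cM_\qp\otimes\bar\cH_k^{(n)}$ whereas $\bar\oY_{\qp,k}$ lands in $\cM_\qp\otimes\bar\cH_{k-1}^{(n)}$, and $\oW_{\qp,k}\Phi_{\qp,k}$ is supported on $\bbcH_{\qp,k}$). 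So the spectral norm of the sum is bounded by the maximum summand norm. Combining this with \rf(lem:normEstimations), the three constraints $|\alpha_{k-1}-\alpha_k|\le 1$, $|\alpha_k|\sqrt{k/n}\le \OO(1)$, and $|\alpha_k|/\sqrt{n}\le \OO(1)$ are all satisfied by the choice $\alpha_k=\max\{n^{1/3}-k,0\}$, yielding $\|\Delta_1\diamond\tilde\Gamma_\qp\|=\OO(1)$ and hence $\|\Delta_1\circ\tilde\Gamma_\qp\|=\OO(1)$ via the $\gamma_2$-norm trick. Passing to the submatrix via~\rf(eqn:DeltaiNorms) gives $\|\Delta_i\circ\Gamma_\qp\|=\OO(1)$ for all $i$ after rescaling.

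Next I would lower-bound $\|\Gamma_\qp\|$. The natural attempt is $\|\Gamma_\qp\|\ge\|\check\oW_{\qp,0}\|\cdot\alpha_0$, but $\Gamma_\qp$ is a sum $\sum_k\alpha_k\check\oW_{\qp,k}$, so one must verify there is no destructive interference on the all-ones test vector. The argument in \rf(sec:subIllegal) isolates the contribution of $k=0$: decompose $\cH_{\legal}=\bigoplus_L\cH_{\legal,L}$ over $2n$-subsets $L\subset[q]$; each $\cH_{\legal,L}$ carries the regular representation of $\S_{[2n]}$; the all-ones vector $\mathbf 1_{q^{2n}}$ projects only onto the trivial isotypic component, i.e.\ $\Pi_{\legal,L,(2n)}\mathbf 1_{q^{2n}}$; and $\bar\Pi_k^{(2n)}$ annihilates this trivial component for $k\neq 0$. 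Thus $\check\oW_{\qp,k}\mathbf 1=0$ for $k\ge 1$, so $\Gamma_\qp\mathbf 1=\alpha_0\check\oW_{\qp,0}\mathbf 1$, and since $q=\Omega(n^2)$ a constant fraction of legal columns survive, giving $\|\check\oW_{\qp,0}\mathbf 1\|/\|\mathbf 1\|=\Omega(1)$ and $\|\Gamma_\qp\|=\Omega(n^{1/3})$.

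The main obstacle is the legality step combined with ensuring the symmetry machinery still applies after deletion. The delicate part is that $q=\Omega(n^2)$ is needed precisely so that the fraction of legal rows and columns is bounded below by a constant (roughly via $\binom{q}{2n}\cdot (2n)!/q^{2n}=1-\OO(n^2/q)$), which prevents the spectral mass from concentrating in the deleted part. Once this is secured, combining the two bounds yields $\Adv(\qp)\ge \|\Gamma_\qp\|/\max_i\|\Delta_i\circ\Gamma_\qp\|=\Omega(n^{1/3})$, which is the conclusion of \refthm{main}.
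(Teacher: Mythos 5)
Your proposal follows the paper's own argument essentially verbatim: the telescoping $\Delta_1$-action combined with \rf(lem:normEstimations) and the orthogonality of each family of operators across different $k$ gives the $\OO(1)$ constraint bound, while the regular-representation decomposition of the legal subspaces $\cH_{\mathrm{legal},L}$ isolates the $k=0$ term and yields $\|\Gamma_\qp\|=\Omega(\alpha_0)$ exactly as in \rf(sec:subIllegal). The one quibble is your parenthetical claim that the ranges of $\bar\oX_{\qp,k}$ and $\bar\oY_{\qp,k}$ are orthogonal because they sit in different weight spaces --- both in fact map into the weight-$k$ space $\cH^{(n)}_k$ --- but this claim is not needed, since the triangle inequality over the three families already suffices once each family is handled separately.
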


\section{Proofs}
\label{sec:proofs}

\mycommand{skappa}{F}
In this section, we present the proofs of Lemmata~\ref{lem:newMain} and~\ref{lem:normEstimations}.
Throughout the whole section, we will work in the $e$-basis.
For the group algebra $\bC\bS_m$, we will use the standard basis consisting of the permutations $\bS_m$.

Recall the operator $E_t = C^-_tR^+_t$ from \rf{sec:representation} with the property that the submodule $\bC\bS_m E_t$ of the module $\bC\bS_m$ is isomorphic to $\specht{\lambda}$, where $\lambda$ is the shape of $t$.
Its scaled version $\tE_t$ is an idempotent.  Moreover, there exists a non-zero $v\in\specht\lambda$ with $\tE_tv = v$, whereas $\tE_tv=0$ for all $v\in\specht\mu$ with $\mu\ne\lambda$.
We start with constructing a similar operator for the whole space $\bar\cH_k^{(m)}$.
Let $\skappa \in \C\bS_m$ be defined by
\begin{equation}
\label{eqn:kappa}
\skappa = \frac{1}{2^{k}}\; \sA[\eps - (a_1,b_1)]\sA[\eps - (a_2,b_2)]\cdots \sA[\eps - (a_k, b_k)].
\end{equation}
Here, $a_1,b_1,\dots,a_k,b_k$ are some distinct fixed elements of $[m]$, $\eps$ is the identity element of $\bS_m$, and $(a_i,b_i)$ denotes the transposition of $a_i$ and $b_i$.
Note that $\skappa$ is an orthogonal projector.

\begin{lem}
\label{lem:kappa}
Let $\lambda\vdash m$. 
If $\lambda_1 = m-k$, then there exists a non-zero vector $v\in\specht\lambda$ such that $\skappa v = v$.  
If $\lambda_1 > m-k$, then $\skappa v = 0$ for all $v\in\specht\lambda$.
\end{lem}

\pfstart
We start with the first statement.
Let $\ell=\lambda_2$, and let $t$ be a Young tableau of shape $\lambda$ with $a_1,\dots,a_k$, in this order, being the first $k$ entries in the first row, and $b_1,\dots,b_k$ being the entries in the remaining rows, so that $b_1,\dots,b_{\ell}$ form the second row.
Since $\specht\lambda$ is isomorphic to $\bC\bS_m E_t$, we will work with the latter from now on.

Let us define
\begin{equation}
\label{eqn:vdef}
v = \sA[\eps - (a_{\ell+1},b_{\ell+1})]\cdots \sA[\eps - (a_k, b_k)] E_t \in\bC\bS_m E_t.
\end{equation}
Then, clearly, $\frac12 \sA[\eps - (a_i,b_i)] v = v$ for $i>\ell$.
On the other hand, $\frac12 \sA[\eps - (a_i,b_i)] v = v$ for all $i\le\ell$ as well since for them $(a_i,b_i)\in C_t$.  
Hence, we can conclude that $Fv = v$.

It remains to show that $v\ne 0$.
Open the brackets in~\rf{eqn:vdef}.  The result is $\sum_{\pi} (\sgn\pi)\pi E_t$, where 
$\pi$ runs through all possible products of the transpositions $(a_{\ell+1},b_{\ell+1}),\dots, (a_k,b_k)$.
Let $\rho = (a_{\ell+1},b_{\ell+1})\cdots(a_k,b_k)$.
Since the coefficient of the basis vector $\eps$ in $E_t$ is 1, the vector $\rho E_t$ uses the basis vector $\rho$.
Note that $\rho$ maps all $b_{\ell+1},\dots,b_k$ to the elements of $[m]$ lying outside of the first $\ell$ columns of $t$.
On the other hand, any permutation $\sigma$ used in $E_t$ maps $b_{\ell+1},\dots,b_k$ to the elements in the first $\ell$ columns of $t$.
Hence, $\pi\sigma$ maps some of $b_{\ell+1},\dots,b_k$ to an element in the first $\ell$ columns of $t$, unless $\pi$ is a product of all the $k-\ell$ transpositions, i.e., unless $\pi = \rho$.  
Thus, $\pi E_t$ does not use $\rho$ unless $\pi = \rho$. Hence, 
the coefficient of $\rho$ is non-zero in $v$, and, in particular, $v\ne 0$.

For the second statement, recall that any $\specht\lambda$ is spanned by the images of $R^+_t$ as $t$ runs through the Young tableau of shape $\lambda$.
Fix $t$ of shape $\lambda$.
Since $\lambda_1 > m-k$, at least one pair $\{a_1,b_1\},\dots,\{a_k,b_k\}$ is contained in the first row of $t$, and hence $\skappa$ is zero on the image of $R_t^+$.  Thus, $\skappa$ is zero on the whole $\specht\lambda$.
\pfend

\begin{cor}
\label{cor:Hm}
The $\bS_m$-module $\cH^{(m)}_k$ only uses irreps $\specht\lambda$ with $\lambda_1\ge m-k$.
\end{cor}

\pfstart
Assume towards contradiction that $\cH^{(m)}_k$ contains a copy of irrep $\specht\lambda$ with $\lambda_1<m-k$.  Then, by \rf{lem:kappa}, there exist distinct elements $a_1,b_1,\dots,a_{k+1},b_{k+1}\in [m]$ and a non-zero vector $v\in \cH^{(m)}_k$ such that $F''v=v$, where 
$
F'' = \frac{1}{2^{k+1}} \sA[\eps - (a_1,b_1)]\sA[\eps - (a_2,b_2)]\cdots\sA[\eps - (a_{k+1},b_{k+1})]
$.
On the other hand, for every $e$-basis vector $e_{c_1}\otimes\cdots\otimes e_{c_m}$ of weight $k$, there exists $i\in[k+1]$ such that $c_{a_i} = c_{b_i} = 0$.  Hence, $F''$ nullifies this basis vector, and, consequently, is zero on the whole $\cH^{(m)}_k$.
This contradiction finishes the proof.
\pfend

\begin{cor}
\label{cor:kappa}
If $v\in\cH^{(m)}_k$ is such that $\skappa v = v$, then $v \in \bar\cH_k^{(m)}$.
\end{cor}

\pfstart
By \rf{lem:kappa}, $v$ is orthogonal to all copies of $\specht\lambda$ in $\cH^{(m)}_k$ with $\lambda_1>m-k$.  And \rf{cor:Hm} then only leaves the possibility that $v$ is contained in $\bar\cH_k^{(m)}$.
\pfend

Before we proceed with the proofs, let us introduce the following piece of notation. 
For $a\in[m]$ and $z\in[0..q-1]$, $\w{a}{z}$ denotes the component $e_z$ residing in the $a$-th copy of $\cH$ in the tensor product $\cH^{\otimes m}$.
We might omit the upper index $a$ when $z=0$ and when it is clear in which copies of $\cH$ zero components $e_0$ must reside.
A juxtaposition of such components denotes the tensor product.  For instance,
given $m=6$,
\[
\w{4}{3}\w{2}{0}\w{5}{3}\w{1}{7}\w{}{0}\w{}{0}
= e_7\otimes e_0\otimes e_0\otimes e_3\otimes e_3\otimes e_0.
\]

\subsection{Proof of \rf{lem:newMain}}
\label{sec:newMainProof}
Let us for brevity denote $\Phi = \Phi^{(m)}_k$, and recall that 
\[
\Phi = \bar\Pi_k^{(m)} - \Pi_0\otimes \bar\Pi_k^{(m-1)} - \Pi_1\otimes \bar\Pi_{k-1}^{(m-1)}.
\]
Recall that $\Phi$ is symmetric with respect to $\bS_{m-1}' = \bS_{[2..m]}$, and it is zero on the orthogonal complement of $\cH^{(m)}_k$.
By the branching rule~\cite[Section 2.8]{sagan:symmetricGroup} applied to the term $\bar\Pi_k^{(m)}$ in the equation above, the image of $\Phi$ only uses irreps $\specht\lambda$ of $\bS_{m-1}'$ corresponding to partitions $\lambda\vdash m-1$ with $\lambda_1 = m-k$ and $\lambda_1 = m-k-1$.

Let us start with the latter case: $\lambda_1 = m-k-1$.
Fix $\lambda\vdash m-1$ with $\lambda_1 = m-k-1$.
By Schur's lemma, there exists a copy of $\specht\lambda$ in $\cH^{(m)}_k$ consisting of principal right-singular vectors of $\Phi$ on the isotypic subspace corresponding to $\lambda$.
Fix $k$ disjoint pairs $\{a_1,b_1\},\ldots,\{a_k,b_k\}\subset[2..m]$, and let them specify $F$ as in~\rf{eqn:kappa}.
By \rf{lem:kappa}, there is a vector $v$ in this irrep that satisfies $\skappa v = v$.

Consider the vector $v$ in the $e$-basis.
Since $F$ is an orthogonal projector, $v$ only uses those basis vectors that are not nullified by $F$.
The operator $\skappa$ nullifies any basis vector unless it has a non-zero (i.e., different from $e_0$) component in possitions specified by each pair $\{a_i,b_i\}$.  Since each basis vector in $\cH^{(m)}_k$ has exactly $k$ non-zero components, the first component of every basis vector used by $v$ must be $e_0$. Hence, $v\in \cH_0\otimes \bar \cH_k^{(m-1)}$.  By \rf{cor:kappa}, we also have $v\in \bar\cH_k^{(m)}$. Thus
\[
\Phi v = \bar\Pi_k^{(m)}v - \Pi_0\otimes \bar\Pi_k^{(m-1)}v - \Pi_1\otimes \bar\Pi_{k-1}^{(m-1)}v = v - v - 0 = 0,
\]
and $\Phi$ is zero on the isotypic subspace corresponding to $\lambda$.
This proves the second statement of \rf{lem:newMain}.

Now let us consider the remaining case: $\lambda_1 = m-k$.  
We use $\bbcH^{(m)}_k$ to denote the subspace of $\cH^{(m)}_k$ spanned by the irreps having exactly $k-1$ boxes below the first row.
In this notation, the subspace of $\cH^{(m)}_k$ spanned by the irreps $\specht\lambda$ of $\bS_{m-1}'$ with $\lambda_1 = m-k$ is given by $\cH_0\otimes\bbcH^{(m-1)}_k \oplus \cH_1\otimes\bar\cH^{(m-1)}_{k-1}$.  
We treat these two cases separately in the two claims below.
If $\Phi$ has norm $\OO(1/\sqrt{m})$ on both of them, then so it does on their direct sum, thus proving \rf{lem:newMain}.

\begin{clm}
\label{clm:odin}
$\Phi$ has norm $\OO(1/\sqrt{m})$ on $\cH_1\otimes\bar\cH^{(m-1)}_{k-1}$.
\end{clm}

\pfstart 
Fix $k-1$ disjoint pairs $\{a_1,b_1\},\ldots,\{a_{k-1},b_{k-1}\}\subset[2..m]$,
and let $a_k=1$ for notational convenience. 
Define an orthogonal projector
\begin{equation}
\label{eqn:kappaprim}
\skappa' = \frac{1}{2^{k-1}}\; \sA[\eps - (a_1,b_1)]\sA[\eps - (a_2,b_2)]\cdots \sA[\eps - (a_{k-1}, b_{k-1})].
\end{equation}
By Schur's Lemma and~\rf{lem:kappa}, there exists a principal right-singular vector $v\in\cH_1\otimes\bar\cH^{(m-1)}_{k-1}$ of 
$\Phi(\Pi_1\otimes\bar\Pi^{(m-1)}_{k-1})$ such that $\skappa'v=v$.

As $z=(z_1,\dots,z_k)$ runs through $[1..q-1]^k$, the following vectors
\[
w_z = 
\skD[
\frac{\w{a_1}{z_1}\w{b_1}{0}-\w{a_1}{0}\w{b_1}{z_1}}{\sqrt2} \otimes
\frac{\w{a_2}{z_2}\w{b_2}{0}-\w{a_2}{0}\w{b_2}{z_2}}{\sqrt2} \otimes\cdots\otimes
\frac{\w{a_{k-1}}{z_{k-1}}\w{b_{k-1}}{0}-\w{a_{k-1}}{0}\w{b_{k-1}}{z_{k-1}}}{\sqrt2}
]
\otimes
\w{a_k}{z_k}\w{}{0}\w{}{0}\cdots\w{}{0}
\]
form an orthonormal basis of the image of $\skappa'$ in $\cH_1\otimes\bar\cH^{(m-1)}_{k-1}$.
Hence, we can write $v=\sum_z\beta_z w_z$.

For brevity, let $u_z\in\cH^{\otimes(2k-2)}$ denote the unit vector in the square brackets in the above definition of $w_z$.
Then we can decompose $w_z = w'_z + w''_z$, with
\begin{align*}
&w'_z = \frac1{m-2k+2}\; u_z \otimes\sum_{b_k\in [m]\setminus\{a_1,b_1,\dots,a_{k-1},b_{k-1},a_k\}} \skB[\w{a_k}{z_k}\w{b_k}{0} - \w{a_k}{0}\w{b_k}{z_k}]
\otimes \w{}{0}\cdots\w{}{0},\\
&w''_z = \frac1{m-2k+2}\; u_z\otimes \sum_{c\,\in [m]\setminus\{a_1,b_1,\dots,a_{k-1},b_{k-1}\}} \w{c}{z_k}\w{}{0}\w{}{0}\cdots\w{}{0}.
\end{align*}
Note that $w'_z\in \bar\cH_k^{(m)}$ by \rf{lem:kappa}.
In addition, $v$ is orthogonal to $\cH_0\otimes\bar\cH^{(m-1)}_k$, therefore
\[
\Phi v = \bar\Pi^{(m)}_k v - v
= \sum_z\beta_z\sC[w'_z+\bar\Pi^{(m)}_k w''_z] - \sum_z\beta_z\sC[w'_z+w''_z] 
= -\sC[\I_\cH^{\otimes m}-\bar\Pi^{(m)}_k]\sum_z\beta_z w''_z.
\]
We have
 $\|w''_z\| = 1/\sqrt{m-2k+2}$ and the vectors $w''_z$ are mutually orthogonal.
 Thus
\[
\|\Phi v\|^2  
 \le \normC|\sum_z \beta_z w''_z|^2 = \sum_z \beta_z^2 \|w''_z\|^2 = \frac{1}{m-2k+2} \sum_z \beta_z^2 = \frac{1}{m-2k+2} \|v\|^2,
\]
and the norm of $\Phi$ on $\cH_1\otimes\bar\cH^{(m-1)}_{k-1}$ is at most $1/\sqrt{m-2k+2} = \OO(1/\sqrt m)$.
\pfend

\begin{clm}
\label{clm:dva}
$\Phi$ has norm $\OO(1/\sqrt{m})$ on $\cH_0\otimes\bbcH^{(m-1)}_k$.
\end{clm}

\pfstart
By Schur's Lemma there exist $\lambda\vdash m-1$ with $\lambda_1=m-k$ and a copy of the irrep $\specht\lambda$ in $\cH_0\otimes\bbcH^{(m-1)}_k$ consisting of principal right-singular vectors of $\Phi(\Pi_0\otimes\bbPi^{(m-1)}_k)$.
Let $t$ be a Young tableau of shape $\lambda$ with the entries in $[2..m]$.
Recall the operator $E_t = C_t^-R_t^+$ from~\rf{eqn:Et}.
There exists a non-zero vector $v$ in this irrep that is also in the image of $E_t$.
Let $d_1,\dots,d_\ell$ be the entries in the columns of $t$ of height $1$.  We have $\ell\ge m-2k+1$.
Also, for convenience, let $d_{\ell+1} = 1$.

Since $C_t$ does not affect $d_1,\dots,d_\ell$, 
and $R_t$ uses all the symmetric group on these elements,
the vector $v$ is symmetric with respect to permuting them.
Also, $v$ is in the image of $C^-_t$, which means it only uses basis vectors 
such that, for entries of each column of $t$, at most one corresponding component is $e_0$.
Consequently, these basis vectors have at most one non-zero component in positions $d_1,\dots,d_\ell$, and the vector $v$ is of the form
\[
v = u_0\otimes \w{d_1}{0}\w{d_2}{0}\cdots\w{d_{\ell}}{0}\otimes \w{d_{\ell+1}}{0}
+ \sum_{z\in[1..q-1]} u_z\otimes \sC[\sum_{j=1}^\ell \w{d_j}{z}\w{}{0}\cdots\w{}{0} ]\otimes \w{d_{\ell+1}}{0}
\]
for some vectors $u_0,u_1,\dots,u_{q-1}\in\cH^{\otimes (m-\ell-1)}$.
The vector $v$ can be decomposed as $v'+v''$ with
\begin{align*}
&
v' = u_0\otimes \w{d_1}{0}\w{d_2}{0}\cdots\w{d_{\ell+1}}{0}
+ \sum_{z\in[1..q-1]} u_z\otimes \sC[\sum_{j=1}^{\ell+1} \w{d_j}{z}\w{}{0}\cdots\w{}{0} ],
\\&
v'' = - \sum_{z\in[1..q-1]} u_z\otimes \w{d_{\ell+1}}{z}\w{}{0}\cdots\w{}{0}.
\end{align*}
Note the intentional reseblance between $v$ and $v'$, and note that
\[
\|v\|^2 \ge \sum_{z\in[1..q-1]} \ell \|u_z\|^2 = \ell \|v''\|^2.
\]
We claim that $v'$ is in the image of $E_{t'}$, where $t'$ is obtained from $t$ by appending the element 1 to the first row.
This means that $v'$ is contained in $\specht{\,\sh(t')}$.
In particular, $v'$ is orthogonal to $\bar\cH^{(m)}_k$.
Because $v$ is orthogonal to both $\cH_0\otimes \bar\cH_k^{(m-1)}$ and  $\cH_1\otimes \bar\cH_{k-1}^{(m-1)}$, we have
\[
\|\Phi v\|^2 
= \normA|\bar\Pi_k^{(m)}v|^2 = \normA|\bar\Pi_k^{(m)}v''|^2
\le \|v''\|^2 \le \|v\|^2/\ell.
\]
Hence, the norm of $\Phi$ on $\cH_0\otimes\bbcH^{(m-1)}_k$ is at most $1/\sqrt{m-2k+1} = \OO(1/\sqrt m)$.

Now let us prove our claim.  By linearity, it suffices to prove it when $v=E_te$ for some basis vector $e$ with $e_0$ as the first component.
Let $R_1$ be the set of entries in the first row of $t$, and $R'$ be the group of entry permutations within the rows of $t$ except the first one.  Denote by $A\subset R_1$ the set of entries in $R_1$ that correspond to non-zero components in $e$.
Then, $E_t e$ is proportional to
\[
\sum_{\substack{S\subseteq R_1:\; |S|=|A|\\ |S\cap\{d_1,\dots,d_{\ell}\}|\le 1}} \;\sum_{\substack{\pi\colon A\to S\\\text{$\pi$ is a bijection}}}
\; \sum_{\pi' \in R'} \sum_{\rho \in C_t}  (\sgn \rho) \rho\pi'\pi e,
\]
where $\pi e$ is defined in the obvious way,
and $E_{t'}e$ is proportional to
\[
\sum_{\substack{S\subseteq R_1\cup\{d_{\ell+1}\}:\; |S|=|A|\\ |S\cap\{d_1,\dots,d_{\ell+1}\}|\le 1}} \skD[\sum_{\substack{\pi\colon A\to S\\\text{$\pi$ is a bijection}}}
\; \sum_{\pi' \in R'} \sum_{\rho \in C_t}  (\sgn \rho) \rho\pi'\pi e].
\]
The only difference between the two expressions above is that there appeared new subsets $S$ in the outermost 
sum satisfying $S\cap\{d_1,\dots,d_{\ell+1}\} = \{d_{\ell+1}\}$.
Since neither $R'$ nor $C_t$ affect $\{d_1,\dots,d_{\ell+1}\}$,
the expression in the brackets with $S\cap\{d_1,\dots,d_{\ell+1}\} = \{d_{\ell+1}\}$
can be obtained from the same expression for $(S\setminus\{d_{\ell+1}\})\cup\{d_{\ell}\}$
by exchanging the $d_{\ell}$-th and the $d_{\ell+1}$-th copies of the space $\cH$ in the tensor product $\cH^{\otimes m}$.
This proves that $v'$, defined via the same vectors $u_0,u_1,\ldots,u_{q-1}$ as $v$, is proportional to $E_{t'}e$.
\pfend

\subsection{Proof of \rf{lem:normEstimations}}
\label{sec:normEstimationsProof}

Recall that the matrix $W_{\qp,k}$ is a block column matrix consisting of $|M_\qp|$ blocks $\frac1{\sqrt{|M_\qp|}} W_k^\mu$, one for every matching $\mu\in M_\qp$.  So, let us focus on $W_k^\mu$ given by~\rf{eqn:Fsigma}.
Although $W_k^\mu$ is defined on the whole $\cH^{\otimes 2n}$, in reality it maps $\cH^{(2n)}_k$ to $\cH^{(n)}_k$ and is zero on the orthogonal complement.

A convenient way to illustrate the action of $W_k^\mu$ is as follows.  Consider the graph $G_\mu$ on $2n$ vertices with $n$ edges given by the pairs in $\mu$.  The basis vectors of $\cH^{(2n)}_k$ correspond to the labellings of the vertices with components in $\{e_0,e_1,\dots,e_{q-1}\}$ that have exactly $k$ non-zero components.
Fix such a labelling and let $e$ be the corresponding basis vector.
If there is an edge connecting two non-zero components, then $W_k^\mu e=0$. 
Otherwise, $W_k^\mu e$ is a basis vector of $\cH^{(n)}_k$, which corresponds to the following labelling 
of the edges: an edge connecting two components $e_0$ is labeled by $e_0$ and an edge connecting $e_0$ and a non-zero component $e_i$ is labeled by $e_i$.

\begin{lem}
\label{lem:Wkv}
Recall the operator $\skappa$ from~\rf{eqn:kappa}.
For any vector $v\in\cH^{(2n)}_k$ satisfying $\skappa v = v$ and any matching $\mu$, we have $\|W_k^\mu v\|\le \|v\|$, $\|X_k^\mu v\|\le \|v\|$, and $\|Y_k^\mu v\|\le \|v\|$, 
where $W_k^\mu$, $X_k^\mu$, and $Y_k^\mu$ are defined in~\rfO(eqn:Fsigma) and~\rfO(eqn:GHI).
\end{lem}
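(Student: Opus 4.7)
The plan is to reduce all three bounds to the single inequality $\|W_k^\mu v\| \le \|v\|$, and then compute $(W_k^\mu)^* W_k^\mu$ explicitly using the antisymmetry encoded by $\kappa v = v$. The reduction is immediate: the three operators $X_k^\mu$, $Y_k^\mu$, $Z_k^\mu$ from the decomposition~\rf(eqn:Wdecomposition) have pairwise orthogonal input supports (corresponding to the three weight patterns ``both zero'', ``only $\mu_{1,2}$ nonzero'', ``only $\mu_{1,1}$ nonzero'' on the first $\mu$-pair) and orthogonal output supports at output position $1$ (in $\cH_0$, $\cH_1$, $\cH_1$ respectively). Hence $\|W_k^\mu v\|^2 = \|X_k^\mu v\|^2 + \|Y_k^\mu v\|^2 + \|Z_k^\mu v\|^2$, and each summand is bounded by $\|W_k^\mu v\|^2$.

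To prove $\|W_k^\mu v\| \le \|v\|$ I would first describe the $\kappa$-fixed subspace of $\cH_k^{(2n)}$ explicitly. Since $\kappa$ is a product of $k$ commuting antisymmetrizers $\frac12(\eps - (a_i, b_i))$, the condition $\kappa v = v$ forces $v$ to be antisymmetric under each swap $(a_i, b_i)$. Combined with the weight constraint, an expansion in the $e$-basis shows that $v$ must vanish on any basis tensor with a nonzero entry outside $\bigcup_i\{a_i,b_i\}$, and must have exactly one nonzero entry on each pair $\{a_i,b_i\}$. Therefore $v = \sum_\alpha \gamma_\alpha v_\alpha$ summed over $\alpha \in \{1,\ldots,q-1\}^k$, where $v_\alpha$ places the antisymmetric vector $e_{\alpha_i}\otimes e_0 - e_0\otimes e_{\alpha_i}$ on each pair $\{a_i,b_i\}$ and $e_0$ elsewhere. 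The $v_\alpha$ are pairwise orthogonal with $\|v_\alpha\|^2 = 2^k$.

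Next, direct computation gives $\Psi_0^*\Psi_0 = \Pi_0\otimes\Pi_0$, $\Psi_1^*\Psi_1 = 2\Pi_{\mathrm{sym},1}$ (twice the projector onto the symmetric weight-$1$ subspace of a pair), and $\Psi_0^*\Psi_1 = 0$. Since the projectors $\bigotimes_j\Psi_{c_j}^*\Psi_{c_j}/2^k$ for distinct $c$ of weight $k$ have orthogonal images, we get $(W_k^\mu)^*W_k^\mu = 2^k\,\Pi$, where $\Pi$ is the orthogonal projector onto the subspace of tensors with each $\mu$-pair in its weight-$0$ or symmetric weight-$1$ subspace and total weight $k$. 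It thus suffices to show $\langle v_\alpha,\Pi v_\alpha\rangle \le 2^{-k}\|v_\alpha\|^2 = 1$ for each $\alpha$; cross-terms $\langle v_\alpha, \Pi v_{\alpha'}\rangle$ with $\alpha \ne \alpha'$ vanish because the supports of $v_\alpha$ and $\Pi v_{\alpha'}$ use different value-tuples.

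Finally, I would expand $v_\alpha = \sum_\sigma \sgn(\sigma)\,e_I(\sigma,\alpha)$ over $\sigma \in \prod_i\{a_i,b_i\}$ and compute $\Pi$ per pair: it annihilates any term whose support has two $\sigma$-values in the same $\mu$-pair, and otherwise symmetrizes each $\sigma_i$ over its $\mu$-pair. Let $S = \{i : \{a_i,b_i\}\text{ is a }\mu\text{-pair}\}$. For $i\in S$, the symmetrization frees $\sigma_i$ and $\sigma_i'$ independently over $\{a_i,b_i\}$, producing a factor $\big(\sum_{\sigma_i\in\{a_i,b_i\}}\sgn(\sigma_i)\big)^2 = (1-1)^2 = 0$ in $\langle v_\alpha, \Pi v_\alpha\rangle$. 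Thus the inner product vanishes whenever $S\ne\emptyset$; when $S=\emptyset$, the symmetrization orbit forces $\sigma = \sigma'$ and a count reduces the inner product to $2^{-k}\cdot|\{\sigma : \sigma_i\text{'s lie in distinct }\mu\text{-pairs}\}| \le 2^{-k}\cdot 2^k = 1$. The main obstacle is this final combinatorial bookkeeping — in particular, isolating the sign-cancellation mechanism that kills the inner product when some pair $\{a_i,b_i\}$ happens to coincide with a $\mu$-pair, and cleanly separating that case from the generic one in which the $\kappa$-pairs are split across $\mu$.
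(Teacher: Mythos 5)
Your setup is largely sound and parallels the paper's: the characterization of the $\kappa$-fixed part of $\cH^{(2n)}_k$ (exactly one non-zero component in each pair $A_i=\{a_i,b_i\}$, placed antisymmetrically, with $e_0$ elsewhere) is exactly the paper's condition \rf(eqn:vUses) together with its ``equivalence classes'', and the identity $(W_k^\mu)^*W_k^\mu=2^k\,\Pi$ is a correct and clean reformulation. The fatal step is the claim that the cross-terms $\langle v_\alpha,\Pi v_{\alpha'}\rangle$ vanish for $\alpha\neq\alpha'$. The projector $\Pi$ symmetrizes each value within its $\mu$-pair, and a $\mu$-pair can straddle two different $\kappa$-pairs $A_i,A_j$; hence $\Pi v_{\alpha'}$ contains basis tensors in which the value $\alpha'_i$ has migrated from $A_i$ into $A_j$, and these do land in the support of $v_\alpha$ whenever $\alpha$ is obtained from $\alpha'$ by permuting values along the paths and cycles of the graph $G$ on the $A_i$'s and singletons with the $\mu$-pairs as edges. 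Concretely, take $n=k=2$, $A_1=\{1,2\}$, $A_2=\{3,4\}$, $\mu=\{\{1,4\},\{2,3\}\}$, $\alpha=(1,2)$, $\alpha'=(2,1)$: a direct computation gives $\langle v_\alpha,\Pi v_\alpha\rangle=\langle v_\alpha,\Pi v_{\alpha'}\rangle=1/2$ while $2^{-k}\|v_\alpha\|^2=1$. The Gram matrix is genuinely non-diagonal, its off-diagonal entries are as large as its diagonal ones, and for $v=v_\alpha+v_{\alpha'}$ one even has $\|W_k^\mu v\|=\|v\|$ with equality --- so the diagonal estimate alone proves only half of what is needed, and bounding the norm of the full Gram matrix by $1$ is precisely the content of the lemma. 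The paper handles exactly this phenomenon with its cycle count: per cycle of $G$ at least half of each equivalence class is annihilated (giving $\|v'\|\le\|v\|/\sqrt{2^c}$) while each surviving image has exactly $2^c$ preimages, and these two factors of $\sqrt{2^c}$ cancel. Any repair of your argument would have to reproduce this bookkeeping for the cycle-rotation terms.

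A secondary problem is the opening reduction. $X_k^\mu v$ is indeed orthogonal to $Y_k^\mu v+Z_k^\mu v$ (output weight $0$ versus $1$ in the first factor), but $Y_k^\mu v$ and $Z_k^\mu v$ both land in $\cH_1$ there and are not orthogonal, so $\|W_k^\mu v\|^2=\|X_k^\mu v\|^2+\|Y_k^\mu v\|^2+\|Z_k^\mu v\|^2$ does not follow; operator-wise, $Y_k^{\mu*}Y_k^\mu$ involves $\Pi_0\otimes\Pi_1$ on the first $\mu$-pair, which is not dominated by $2\Pi_{\mathrm{sym},1}$ (test it on the antisymmetric vector $e_0\otimes e_j-e_j\otimes e_0$). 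Also note that the components of $v$ with a fixed weight pattern on the first $\mu$-pair are not individually $\kappa$-fixed, so you cannot simply apply the $W$-bound to them. The paper avoids this by running the same counting proof separately (with minor modifications) for $X_k^\mu$ and $Y_k^\mu$.
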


\pfstart
We prove the result for $W_k^\mu$, the proofs for $X_k^\mu$ and $Y_k^\mu$ being similar.
Let, for brevity, $W = W_k^\mu$, and let $A_i = \{a_i,b_i\}$ be the pairs from the definition of $\skappa$.
Similarly as in the proof of \rf{clm:odin}, the following vectors give an orthonormal basis of the image of $\skappa$ as $z$ runs through $[1..q-1]^k$:
\[
w_z = 
\frac{\w{a_1}{z_1}\w{b_1}{0}-\w{a_1}{0}\w{b_1}{z_1}}{\sqrt2} \otimes
\frac{\w{a_2}{z_2}\w{b_2}{0}-\w{a_2}{0}\w{b_2}{z_2}}{\sqrt2} \otimes\cdots\otimes
\frac{\w{a_{k}}{z_{k}}\w{b_{k}}{0}-\w{a_{k}}{0}\w{b_{k}}{z_{k}}}{\sqrt2} \otimes
\w{}{0}\w{}{0}\cdots\w{}{0}.
\]
However, contrary to \rf{clm:odin}, the images of these vectors under $W$ are not orthogonal, thus we have to adopt a different proof strategy.

If $\mu$ contains a pair $A_i$ for some $i$, then $W v = 0$, so we will further assume this is not the case.
Construct the graph $G'$ from the graph $G_\mu$ by adding $k$ edges $a_1b_1,\dots,a_kb_k$.
Unlike the edges of $G_\mu$, these new edges will not be labeled, as they do not correspond to copies of $\cH$ in $\cH^{\otimes n}$.
This is a simple graph of maximal degree $2$, and it is a collection of even-length cycles and odd-length paths.  
The set of vertices of each cycle is a union of at least two pairs from $A_1,\dots,A_k$.
A path starts and ends outside of $\bigcup_i A_i$, but the set of its internal vertices is a union (possibly empty) of some of the pairs from $A_1,\dots, A_k$.
Let $\cC$ denote the set of connected components of $G'$, which from now on, to avoid confusion, we will call \emph{connected parts}.  For each $C\in\cC$, we identify $C$ with the set of its vertices.
We decompose the space $\cH^{\otimes 2n}$ as $\cH^{\otimes 2n}=\bigotimes_{C\in\cC} \cH^{\otimes C}$, where $\cH^{\otimes C}$ corresponds to the vectices in $C$.
The above discussion implies that each $w_z$ can be decomposed as $w_z = \bigotimes_{C\in\cC} w_{z,C}$ with $w_{z,C}\in \cH^{\otimes C}$ of norm $1$.

Our goal is to prove that $\|W \skappa\|\le 1$.
Let $\Xi_\skappa\colon \cH^{\otimes 2n}\to\cH^{\otimes 2n}$ be the orthogonal projector onto the span of the basis vectors used by some $w_z$.  These are the basis vectors having exactly one non-zero component in (possitions corresponding to) each $A_i$.
Let $\Xi_W\colon \cH^{\otimes 2n}\to\cH^{\otimes 2n}$ denote the orthogonal projector onto the span of the basis vectors \emph{not} nullified by $W$, i.e., having at most one non-zero component in each pair of $\mu$.
Since the image of $\skappa$ is contained in the coimage of $\Xi_\skappa$, and the coimage of $W$ is contained in the image of $\Xi_W$, we have $WF = W\Xi_W\Xi_\skappa\skappa$.

Note that both $\Xi_W$ and $\Xi_\skappa$ are diagonal $(0,1)$-matrices in the $e$-basis.
Let $\Xi = \Xi_W\Xi_\skappa$, which 
is the orthogonal projector onto the span of all the basis vectors that are used any $w_z$ and not mapped to 0 by $W$.
These basis vectors can be described in terms of $G'$.  In each cycle, vertices labeled by zero and non-zero components alternate.  In each path, each edge $a_ib_i$ has exactly one non-zero component (as its endpoint), and each edge of $G_\mu$ has at most one non-zero component. (See \rf{fig:graphExample} for an example of such graph.)
The conditions for distinct $C\in\cC$ are independent, so $\Xi$ can be decomposed as $\Xi = \bigotimes_{C\in\cC} \Xi_C$ with orthogonal projectors $\Xi_C\colon \cH^{\otimes C} \to\cH^{\otimes C}$.

\tikzstyle{fEdge} = [decorate,decoration={snake,amplitude=1.2pt,segment length=5pt,post length=0pt}]
\tikzstyle{mEdge} = []
\tikzstyle{cblue}=[circle, draw, thin,fill=cyan!20, scale=0.8]
\tikzstyle{sVer}=[circle, draw, thick, fill=white, scale=0.45] 
\tikzstyle{nVer}=[circle, draw,        fill=black, scale=0.4] 

\begin{figure}[!h]
\centering
\begin{tikzpicture}

  \draw[fEdge] (0,0)--(0,1);
  \draw[fEdge] (1,0)--(1,1);
  \draw[fEdge] (5,0)--(5,1);
  \draw[fEdge] (6,0)--(6,1);

  \draw[fEdge] (3,0)--(4,0);
  \draw[fEdge] (3,1)--(4,1);
  \draw[fEdge] (9,0)--(10,0);
  \draw[fEdge] (10,1)--(11,1);

  \draw (2,0)--(2,1);
  \draw (3,0)--(3,1);
  \draw (10,0)--(10,1);
  \draw (11,0)--(11,1);

  \draw (0,0)--(1,0);
  \draw (0,1)--(1,1);
  \draw (4,0)--(5,0);
  \draw (4,1)--(5,1);
  \draw (6,0)--(7,0);
  \draw (6,1)--(7,1);
  \draw (8,0)--(9,0);
  \draw (8,1)--(9,1);

  \foreach \place/\x in {
    {(0,0)/0}, {(1,0)/1}, {(2,0)/2}, {(3,0)/3}, {(4,0)/4}, {(5,0)/5},
    {(6,0)/6}, {(7,0)/7}, {(8,0)/8}, {(9,0)/9}, {(10,0)/10}, {(11,0)/11},
    {(0,1)/12}, {(1,1)/13}, {(2,1)/14}, {(3,1)/15}, {(4,1)/16}, {(5,1)/17},
    {(6,1)/18}, {(7,1)/19}, {(8,1)/20}, {(9,1)/21}, {(10,1)/22}, {(11,1)/23}}
  \node[nVer] (a\x) at \place {};

\node at (0, -0.35) {0};
\node at (1, -0.35) {5};
\node at (2, -0.35) {0};
\node at (3, -0.35) {0};
\node at (4, -0.35) {4};
\node at (5, -0.35) {0};
\node at (6, -0.35) {0};
\node at (7, -0.35) {0};
\node at (8, -0.35) {0};
\node at (9, -0.35) {1};
\node at (10, -0.35) {0};
\node at (11, -0.35) {0};

\node at (0, 1.35) {3};
\node at (1, 1.35) {0};
\node at (2, 1.35) {0};
\node at (3, 1.35) {7};
\node at (4, 1.35) {0};
\node at (5, 1.35) {4};
\node at (6, 1.35) {3};
\node at (7, 1.35) {0};
\node at (8, 1.35) {0};
\node at (9, 1.35) {0};
\node at (10, 1.35) {0};
\node at (11, 1.35) {2};

\node at (0.5,0.25) {\emph{5}};
\node at (4.5,0.25) {\emph{4}};
\node at (6.5,0.25) {\emph{0}};
\node at (8.5,0.25) {\emph{1}};

\node at (0.5,1.25) {\emph{3}};
\node at (4.5,1.25) {\emph{4}};
\node at (6.5,1.25) {\emph{3}};
\node at (8.5,1.25) {\emph{0}};

\node at (1.8,0.5) {\emph{0}};
\node at (2.8,0.5) {\emph{7}};
\node at (9.8,0.5) {\emph{0}};
\node at (10.8,0.5) {\emph{2}};

\end{tikzpicture}
\caption{An example of the graph $G'$ for $n=12$ and $k=8$, which has $K=2$ cycles.
The edges of $\mu$ are represented by straight lines and the edges $a_ib_i$ by wobbly curves.
For readibility, labels $e_z$ are shown as $z$ instead, and the labels of edges are in italics.} 
\label{fig:graphExample}
\end{figure}
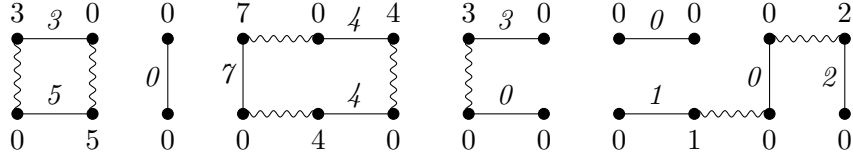

We have $\|W\skappa\| = \|W\Xi\skappa\| \le \|W\Xi\|\,\|\Xi \skappa\|$.
We bound both multipliers separately in the two claims below, from which the lemma follows.
Let $K$ denote the number of cycles in $G'$.

\begin{clm}
\label{clm:Lambdav}
We have $\| \Xi \skappa \| \le 1/\sqrt{2^K}$.
\end{clm}

\pfstart
First, $w_z$ form an orthonormal basis of the image of the projector $\skappa$.
Next, distinct $w_z$ use disjoint sets of basis vectors, hence $\Xi$ maps them to mutually orthogonal vectors.
Thus, it suffices to show that $\|\Xi w_z\|\le 1/\sqrt{2^K}$ 
for each $z$.
We have
\begin{equation}
\label{eqn:Lambdawz}
\|\Xi w_z\| = \normC|\bigotimes_{C\in\cC} \Xi_C w_{z,C} | = \prod_{C\in\cC} \norm|\Xi_C w_{z,C}|.
\end{equation}
If $C$ is a path, we conclude $\|\Xi_C w_{z,C}\| \le 1$ from $\Xi_C$ being a projector.
Now assume $C$ is a cycle.
As the length of $C$ is at least 4, the basis vectors of $\cH^{\otimes C}$ used by $w_{z,C}$ appear in $w_{z,C}$ with coefficients $1/\sqrt{2}^{|C|/2}\leq 1/2$ by the absolute value. But, since zero and non-zero components in a cycle must alternate, at most 2 of these basis vectors are in the image of $\Xi_C$.
Hence, $\|\Xi_C w_{z,C}\| \le1/\sqrt{2}$. 
Multiplying over all $C\in\cC$, we get the claim.
\pfend

\begin{clm}
\label{clm:WLambda}
We have $\| W \Xi \| \le {\sqrt{2^K}}$.
\end{clm}

\pfstart
Consider the matrix $M$ corresponding to $W\Xi$ written in the $e$-basis.
Its rows are labelled by the basis vectors of $\cH^{(n)}_k$, and its columns by the basis vectors used by $\Xi$.  
Because each basis vector used by $\Xi$ is mapped to a basis vector of $\cH^{(n)}_k$, each column of $M$ contains exactly one 1, and the remaining entries are all 0.  It suffices to prove that each row of $M$ contains at most $2^K$ non-zero entries.

Fix a labelling of the edges of $G_\mu$, and we can assume the corresponding row is non-zero.  
Consider a connected part $C\in\cC$.
If $C$ is a path, then there is a unique way to extend the labelling of the edges of $G_\mu$ to the vertices of $C$.  Indeed, one of the edges of $G_\mu$ in the path must have label $e_0$.  Both its endpoints are then $e_0$, and the whole labelling of the vertices of $C$ is uniquely determined.
If $C$ is a cycle, then there are exactly two possibilities to extend the labelling of the edges of $G_\mu$ to the vertices in $C$, because zero and non-zero components must alternate.
This proves that there are exactly $2^K$ entries equal to 1 in this row, and the remaining entries are 0.
\pfend

This ends the proof of \rf{lem:Wkv} for $W=W_k^{\mu}$.  The operators $X_k^{\mu}$ and $Y_k^\mu$ are similar to $W_k^\mu$ with the difference that they map more basis vectors to 0, and the bounds above hold for them as well.
\pfend

\begin{lem}
\label{lem:Wk-1}
Recall the operator $\skappa'$ from~\rf{eqn:kappaprim}.
For any vector $v\in\cH^{(2n)}_k$ satisfying $\skappa'v = v$ and any matching $\mu$, we have $\|W_k^\mu v\|\le \sqrt{3} \|v\|$, where $W_k^\mu$is defined in~\rfO(eqn:Fsigma).
\end{lem}

\pfstart
This is a modification of the proof of \rf{lem:Wkv}, and we adopt the notation from that proof.  We define $A_1,\dots,A_{k-1}$ as before.  The orthonormal basis of the image of $\skappa'$ consists of the vectors of the form
\[
w_{a_i,z} = \frac{\w{a_1}{z_1}\w{b_1}{0}-\w{a_1}{0}\w{b_1}{z_1}}{\sqrt2} \otimes\cdots\otimes
\frac{\w{a_i}{z_i}\w{b_i}{z_k}-\w{a_i}{z_k}\w{b_i}{z_i}}{\sqrt2} \otimes\cdots\otimes
\frac{\w{a_{k-1}}{z_{k-1}}\w{b_{k-1}}{0}-\w{a_{k-1}}{0}\w{b_{k-1}}{z_{k-1}}}{\sqrt2} \otimes
\w{}{0}\w{}{0}\cdots\w{}{0},
\]
where $z_i< z_k$, and
\[
w_{c,z} = \frac{\w{a_1}{z_1}\w{b_1}{0}-\w{a_1}{0}\w{b_1}{z_1}}{\sqrt2} \otimes\cdots\otimes
\frac{\w{a_{k-1}}{z_{k-1}}\w{b_{k-1}}{0}-\w{a_{k-1}}{0}\w{b_{k-1}}{z_{k-1}}}{\sqrt2} \otimes
\w{c}{z_k}\w{}{0}\cdots\w{}{0}.
\]
As before, if $\mu$ contains a pair $A_i$ for some $i$, then $W_k^\mu v=0$, so we assume this is not the case.  The graph $G'$, the set of connected parts $\cC$, and the projector $\Xi$ are all also defined as before.  There is still a decomposition $w = \bigotimes_{C\in\cC} w_C$ for each $w = w_{i,z}$. And the projector $\Xi$ still maps different $w_{i,z}$s to orthogonal vectors.

The main difference is in the description of the basis of the image of $\Xi$ in terms of $G'$.  In each cycle, zero and non-zero components still alternate.  However, there is now one \emph{special path}: the one containing $z_k$.
This path has a \emph{special place}: either a non-zero component outside of $\bigcup_i A_i$, or a pair $A_i$ with two non-zero components.  Starting from the special place, zero and non-zero components alternate.
(See \rf{fig:specialPath} for an example.)
All the other paths are as before.

\tikzstyle{fEdge} = [decorate,decoration={snake,amplitude=1.2pt,segment length=5pt,post length=0pt}]
\tikzstyle{mEdge} = []
\tikzstyle{cblue}=[circle, draw, thin,fill=cyan!20, scale=0.8]
\tikzstyle{sVer}=[circle, draw, thick, fill=white, scale=0.45] 
\tikzstyle{nVer}=[circle, draw,        fill=black, scale=0.4] 
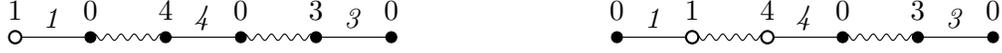
\begin{figure}[!h]
\centering
\begin{tikzpicture}

  \draw[fEdge] (1,3)--(2,3);  \draw[fEdge] (3,3)--(4,3);
  \draw[fEdge] (9,3)--(10,3);  \draw[fEdge] (11,3)--(12,3);

  \foreach \place/\x in {
                {(1,3)/13}, {(2,3)/23}, {(3,3)/33}, {(4,3)/43}, {(5,3)/53},
    {(8,3)/02},                         {(11,3)/32}, {(12,3)/42}, {(13,3)/52}}
  \node[nVer] (a\x) at \place {};

  \foreach \place/\i in {
    {(0,3)/03},
                {(9,3)/12}, {(10,3)/22}}
    \node[sVer] (a\i) at \place {};

  \path[thin] (a02) edge (a12); \path[thin] (a22) edge (a32); \path[thin] (a42) edge (a52);
  \path[thin] (a03) edge (a13); \path[thin] (a23) edge (a33); \path[thin] (a43) edge (a53);

\node at (0, 3.35) {1};
\node at (1, 3.35) {0};
\node at (2, 3.35) {4};
\node at (3, 3.35) {0};
\node at (4, 3.35) {3};
\node at (5, 3.35) {0};

\node at (8, 3.35) {0};
\node at (9, 3.35) {1};
\node at (10, 3.35) {4};
\node at (11, 3.35) {0};
\node at (12, 3.35) {3};
\node at (13, 3.35) {0};

\node at (0.5,3.25) {\emph{1}};
\node at (2.5,3.25) {\emph{4}};
\node at (4.5,3.25) {\emph{3}};

\node at (8.5,3.25) {\emph{1}};
\node at (10.5,3.25) {\emph{4}};
\node at (12.5,3.25) {\emph{3}};

\end{tikzpicture}
\caption{
Two types of special paths. Special places are denoted by one (left) or two (right) hollow circles.}
\label{fig:specialPath}
\end{figure}

Another difference is that the conditions on the basis vectors in $\Xi$ are no longer independent for different $C\in\cC$ (one path being special prevent other paths from being special). Hence,
no longer $\Xi$ decomposes into a tensor product.
However, there is a simple remedy.
Let $P\in\cC$ be a path, and $\Xi^P$ be the projector onto the span of the basis vectors of the image of $\Xi$ that have $P$ as their special path.
Now the conditions become independent, and we regain the decomposition $\Xi^P = \bigotimes_{C\in\cC} \Xi^P_C$.
Also note that all the basis vectors used by given $w_{i,z}$ share the same special path.
So, for a path $P\in\cC$, let $\skappa^P$ denote the projector onto the span of $w_{i,z}$ with special path $P$.

Let again $K$ be the number of cycles in $G'$.
We have the following analogues of Claims~\ref{clm:Lambdav} and~\ref{clm:WLambda}, respectively.
Here, $|P|$ denotes the number of vertices in path $P$, that is, its length plus $1$.

\begin{clm}
For a path $P\in\cC$, we have $\|\Xi^P \skappa^P\| \le 1/\sqrt{2^{K+\max\sfig{0, |P|-4}/2}}$.
\end{clm}

\pfstart
The projector $\Xi^P$ maps distinct $w_{i,z}$ to orthogonal vectors, hence it suffices to prove that $\|\Xi^P w\| \le 1/\sqrt{2^{K+\max\sfig{0, |P|-4}/2}}$ for any $w = w_{i,z}$ with special path $P$.

As in~\rf{eqn:Lambdawz}, we have $\|\Xi^P w\| = \prod_{C\in\cC} \|\Xi^P_C w_C\|$.
And we still use the bounds $\|\Xi^P_C w_C\|\le 1$ and $\|\Xi^P_C w_C\|\le 1/\sqrt2$ 
 when $C$ is a non-special path or a cycle, respectively.

Now consider $\Xi^P_P w_P$.  If $|P|=2$, then the bound $\|\Xi^P_P w_P\|\le \|w_P\|$ suffices.
So let us assume $|P|\ge 4$.  Then, $w_P$ uses $2^{(|P|-2)/2}$ basis vectors, and it is easy to check that at most $2$ of them are used in $\Xi^P$.
Indeed, the special position is uniquely determined, and, starting from it, zero and non-zero components must alternate.
The factor 2 comes from the two possible arrangements of non-zero components in the special pair $A_i$ (if there is one).
Thus, $\|\Xi^P_P w_P\|\le \sqrt{2^{(|P|-4)/2}} \|w_P\|$.
Multiplying over all $C\in\cC$, we get the required result.
\pfend

\begin{clm}
For a path $P\in\cC$, we have $\| W \Xi^P \| \le \sqrt{2^{K-1}(|P|+2)}$.
\end{clm}

\pfstart
Define the matrix $M$ for $W \Xi^P$ as in the proof of~\rf{clm:WLambda}.  
Still, each column of $W$ contains exactly one 1, the remaining entries all being 0.  
Fix a labelling of the edges of $G_\mu$ such that the corresponding row of $M$ is non-zero.

Consider a connected part $C\in\cC$.  Again, if $C$ is a non-special path or a cycle, then there are exactly 1 or 2 ways, respectively, to extend the labelling of the edges of $G_\mu$ to the vertices in $C$.
Now assume $C$ is the special path $P$.  All the edges of $G_\mu$ used by $P$ must have non-zero labels.  There are at most $(|P|+2)/2$ ways to extend this labelling to the labelling of the vertices in $P$: for each choice of the special position, there is at most one labelling.
Hence, there are at most $2^{K-1}(|P|+2)$ entries equal to 1 in each row of $M$, and the norm of $M$ is at most $\sqrt{2^{K-1}(|P|+2)}$.
\pfend

Now we can finish the proof of \rf{lem:Wk-1}.  
It suffices to prove that $\|WF'\|\le \sqrt3$.  
We have
$WF' = \sum_P WF^P = \sum_P W\Xi^PF^P$, where the sum is over all paths $P$ in $\cC$.
Note that $W$ maps basis vectors with distinct special paths to orthogonal vectors.
This means that $\|WF'\| = \max_P \|W\Xi^PF^P\|$.
Given any path $P\in\cC$, by the above two claims,
\[
\|W\Xi^PF^P\| \le \|W\Xi^P\|\, \|\Xi^P F^P\| \le 
\sqrt{\frac{(|P|+2)/2}{2^{\max\sfig{0, |P|-4}/2}}} \le \sqrt{3}.\qedhere
\]
\pfend
\medskip

Now we are ready to prove \rfO(lem:normEstimations).
Let us start with point (a) stating that $\|\bar X_{\qp,k}\|\le 1$, where $\bar X_{\qp,k} = X_{\qp, k} (\Pi_0\otimes \bar\Pi_{\qp,k}')$.
This matrix is symmetric with respect to $\bS'_\qp$.  Hence, Schur's lemma implies that there exists an irreducible $\bS'_\qp$-module all consisting of right-singular vectors of $\bar X_{\qp,k}$ of singular value $\|\bar X_{\qp,k}\|$.

By the definition of $\bar\Pi_{\qp,k}'$, the module is isomorphic to either $\specht{(2n-1-k,\lambda)}$ with $\lambda \vdash k$ for \cpp, or $\specht{(n-1-\ell,\lambda)}\otimes \specht{(n-k+\ell,\lambda')}$ with $\lambda\vdash \ell$ and $\lambda'\vdash k-\ell$ for \sep.  By \rfO(lem:kappa), in both cases, there exists a non-zero vector $v$ in the module satisfying $\skappa v = v$ for some choice of $a_1,b_1,\dots,a_k,b_k\in[2..2n]$ (in the case of \sep, one has to take the tensor product of two vectors obtained by two applications of \rfO(lem:kappa)).  
By \rfO(lem:Wkv), $\|X_k^\mu v\|\le \|v\|$ for all $\mu$. Hence $\|X_{\qp, k} v\|\le \|v\|$, and $\|\bar X_{\qp, k}\|\le 1$.

Consider point (b) now.  Similarly as for (a), we get a right-singular vector $v$ of singular value $\|\bar Y_{\qp, k}\|$ such that $\skappa v = v$.  Note that, if $\mu$ matches 1 with an element outside $\{a_1,b_1,\dots,a_k,b_k\}$, then, $Y_{\qp,k}^\mu v = 0$, because both these components are $e_0$ for all basis vectors used in $v$.
Otherwise, we still get $\|Y^\mu_{\qp, k} v\|\le 1$ by \rfO(lem:Wkv). The latter case only holds for an $\OO(k/n)$ fraction of all matchings, hence $\|\bar Y_{\qp ,k}\|= \OO(\sqrt{k/n})$ (cf. the third bullet in the proof of \rf{prp:learning}).

Now, let us prove (c).  We start with $W_{\cp,k}\Phi_{\cp,k}$.
Let $\Xi$ denote the projector onto the subspace of $\cH^{(2n)}_k$ spanned by the irreps of $\bS_\cp' = \bS_{[2..2n]}$ with exactly $k-1$ boxes below the first row.
From \rfO(lem:newMain), we know that $\normA|\Phi_{\cp,k}|=\OO(1/\sqrt{n})$ and $\Phi_{\cp,k} = \Xi \Phi_{\cp,k}$.  
So, it suffices to prove that $\|W_{\cp,k}\Xi \| = \OO(1)$.
This matrix is symmetric with respect to $\bS_\cp'$; hence, it has an irrep isomorphic to $\specht\lambda$ for some $\lambda \vdash 2n-1$ consisting of principal right-singular vectors.
Moreover, by the definition of $\Xi$, $\lambda$ has exactly $k-1$ boxes below the first row.
By \rf{lem:kappa}, $W_{\cp,k}\Xi$ has a principal right-singular vector that satisfies $\skappa'v=v$ with $\skappa'$ as in~\rf{eqn:kappaprim}.
\rf{lem:Wk-1} then implies that $\|W_{\cp,k} \Xi v\| = \|W_{\cp,k}v\|\le \sqrt3\|v\|$; hence, $\normA|W_{\cp,k}\Phi_{\cp,k}|=\OO(1/\sqrt{n})$.

Let us now consider the case of $W_{\se,k}\Phi_{\se, k}$.
Again, all these matrices are symmetric with respect to $\bS'_\se = \bS_{[2..n]}\times \bS_{[n+1..2n]}$.
Thus, there exist $\lambda \vdash n-1$, $\lambda'\vdash n$, and an irrep isomorphic to $\specht\lambda\otimes\specht{\lambda'}$ that consists solely of principal right-singular vectors of $W_{\se,k}\Phi_{\se, k}$.
By the definition of $\Phi_{\se,k}$ from~\rf{eqn:barHprime} and using \rf{lem:newMain}, we have that $\lambda$ has $k-\ell-1$ and $\lambda'$ has $\ell$ boxes below the first row for some $\ell\in[0..k-1]$.
By a double application of \rf{lem:kappa}, there exists a principal right-singular vector $v$ satisfying $F'v = v$.  We have
\[
W_{\se,k}\Phi_{\se, k} v = 
W_{\se,k}\sk[\sum\nolimits_{\ell=0}^{k-1}(\Phi^{(n)}_{k-\ell}\otimes \bar\Pi_{\ell}^{(n)})] v = 
W_{\se,k} (\Phi^{(n)}_{k-\ell} \otimes \bar\Pi^{(n)}_\ell) v.
\]
By \rf{lem:newMain}, $\|(\Phi^{(n)}_{k-\ell} \otimes \bar\Pi^{(n)}_\ell) v\| = \OO(1/\sqrt{n})\|v\|$.  Moreover, since $F'$, as a group algebra element, commutes with $\Phi^{(n)}_{k-\ell} \otimes \bar\Pi^{(n)}_\ell$, the vector $(\Phi^{(n)}_{k-\ell} \otimes \bar\Pi^{(n)}_\ell) v$ is also invariant under the action of $F'$.
Applying \rf{lem:Wk-1}, we get that $\|W_{\se,k}\Phi_{\se, k} v\| = \OO(1/\sqrt{n})\|v\|$, thus proving \rf{lem:normEstimations}.

\section{Discussion} \label{sec:concl}
In the paper, we prove tight lower bounds on the \cpp and \sep problems using the adversary method.
This is done by restricting the construction from~\cite{belovs:onThePower} on some specified irreducible representations of the symmetric group.

One oddity of this result is that it heavily uses the symmetry of the problem that is more typical for the polynomial method.  It is an interesting question whether representation theory of the symmetric group can be avoided in the proof.  In particular, consider the problem of distinguishing a 1-to-1 input from a 2-to-1 input whose matching, as defined in \rfO(sec:analysis), belongs to some specified set of matchings $M$.  Is it possible to characterise the quantum query complexity of this problem by a simple optimisation problem involving $M$?

Another open problem is whether the results in this paper can be combined with the results in~\cite{belovs:onThePower} in order to prove tight lower bounds for the {\sc $k$-distinctness} problem.  In particular, is the algorithm in~\cite{belovs:learningKDist} optimal?

\section*{Acknowledgments}
A.R. would like to thank Andris Ambainis, Chris Godsil, Robin Kothari, Laura Man{\v c}inska, and Robert {\v S}palek for fruitful discussions. 
A.B. was supported by the ERC Advanced Grant MQC and FP7 FET Proactive project QALGO. 
A part of this research was performed while A.R. was at the University of Waterloo supported by Mike and Ophelia Lazaridis Fellowship, David R. Cheriton Graduate Scholarship, and the US ARO.
A.R. also acknowledges the support of the Singapore National Research Foundation under NRF RF Award
No. NRF-NRFF2013-13.


\begin{thebibliography}{10}

\bibitem{aaronson:structure}
S.~Aaronson and A.~Ambainis.
\newblock The need for structure in quantum speedups.
\newblock In {\em Proc.\ of 2nd ACM ITCS}, pages 338--352, 2011.
\newblock  \myhref{http://arxiv.org/abs/0911.0996}{{\ttfamily
  arXiv:0911.0996}}.

\bibitem{shi:collisionLower}
S.~Aaronson and Y.~Shi.
\newblock Quantum lower bounds for the collision and the element distinctness
  problems.
\newblock {\em Journal of the ACM},
  51(4):\myhref{http://dx.doi.org/10.1145/1008731.1008735}{595--605}, 2004.

\bibitem{ambainis:adv}
A.~Ambainis.
\newblock Quantum lower bounds by quantum arguments.
\newblock {\em Journal of Computer and System Sciences},
  64(4):\myhref{http://dx.doi.org/10.1006/jcss.2002.1826}{750--767}, 2002.
\newblock Earlier: \myhref{http://dx.doi.org/10.1145/335305.335394}{{\em
  STOC'00}},  \myhref{http://arxiv.org/abs/quant-ph/0002066}{{\ttfamily
  arXiv:quant-ph/0002066}}.

\bibitem{ambainis:polVsQCC}
A.~Ambainis.
\newblock Polynomial degree vs. quantum query complexity.
\newblock In {\em Proc.\ of 44th IEEE FOCS}, pages
  \myhref{http://dx.doi.org/10.1109/SFCS.2003.1238197}{230--239}, 2003.
\newblock  \myhref{http://arxiv.org/abs/quant-ph/0305028}{{\ttfamily
  arXiv:quant-ph/0305028}}.

\bibitem{beals:pol}
R.~Beals, H.~Buhrman, R.~Cleve, M.~Mosca, and R.~de~Wolf.
\newblock Quantum lower bounds by polynomials.
\newblock {\em Journal of the ACM},
  48(4):\myhref{http://dx.doi.org/10.1145/502090.502097}{778--797}, 2001.
\newblock Earlier: \myhref{http://dx.doi.org/10.1109/SFCS.1998.743485}{{\em
  FOCS'98}},  \myhref{http://arxiv.org/abs/quant-ph/9802049}{{\ttfamily
  arXiv:quant-ph/9802049}}.

\bibitem{belovs:adv-el-dist}
A.~Belovs.
\newblock Adversary lower bound for element distinctness.
\newblock  \myhref{http://arxiv.org/abs/1204.5074}{{\ttfamily
  arXiv:1204.5074}}, 2012.

\bibitem{belovs:learningKDist}
A.~Belovs.
\newblock Learning-graph-based quantum algorithm for $k$-distinctness.
\newblock In {\em Proc.\ of 53rd IEEE FOCS}, pages
  \myhref{http://dx.doi.org/10.1109/FOCS.2012.18}{207--216}, 2012.
\newblock  \myhref{http://arxiv.org/abs/1205.1534}{{\ttfamily
  arXiv:1205.1534}}.

\bibitem{belovs:learning}
A.~Belovs.
\newblock Span programs for functions with constant-sized 1-certificates.
\newblock In {\em Proc.\ of 44th ACM STOC}, pages
  \myhref{http://dx.doi.org/10.1145/2213977.2213985}{77--84}, 2012.
\newblock  \myhref{http://arxiv.org/abs/1105.4024}{{\ttfamily
  arXiv:1105.4024}}.

\bibitem{belovs:onThePower}
A.~Belovs and A.~Rosmanis.
\newblock On the power of non-adaptive learning graphs.
\newblock {\em Computational Complexity},
  23(2):\myhref{http://dx.doi.org/10.1007/s00037-014-0084-1}{323--354}, 2014.
\newblock Earlier: \myhref{http://dx.doi.org/10.1109/CCC.2013.14}{{\em
  CCC'13}},  \myhref{http://arxiv.org/abs/1210.3279}{{\ttfamily
  arXiv:1210.3279}}.

\bibitem{spalek:kSumLower}
A.~Belovs and R.~{\v Spalek}.
\newblock Adversary lower bound for the {$k$-sum} problem.
\newblock In {\em Proc.\ of 4th ACM ITCS}, pages
  \myhref{http://dx.doi.org/10.1145/2422436.2422474}{323--328}, 2013.
\newblock  \myhref{http://arxiv.org/abs/1206.6528}{{\ttfamily
  arXiv:1206.6528}}.

\bibitem{brassard:collision}
G.~Brassard, P.~H{\o}yer, and A.~Tapp.
\newblock Quantum cryptanalysis of hash and claw-free functions.
\newblock In {\em Proc.\ of 3rd LATIN}, volume 1380 of {\em LNCS}, pages
  \myhref{http://dx.doi.org/10.1007/BFb0054319}{163--169}. Springer, 1998.
\newblock  \myhref{http://arxiv.org/abs/quant-ph/9705002}{{\ttfamily
  arXiv:quant-ph/9705002}}.

\bibitem{buhrman:querySurvey}
H.~Buhrman and R.~de~Wolf.
\newblock Complexity measures and decision tree complexity: a survey.
\newblock {\em Theoretical Computer Science},
  288:\myhref{http://dx.doi.org/10.1016/S0304-3975(01)00144-X}{21--43}, 2002.

\bibitem{curtis:representationTheory}
C.~W. Curtis and I.~Reiner.
\newblock {\em Representation theory of finite groups and associative
  algebras}.
\newblock AMS, 1962.

\bibitem{hoyer:advNegative}
P.~H{\o}yer, T.~Lee, and R.~{\v Spalek}.
\newblock Negative weights make adversaries stronger.
\newblock In {\em Proc.\ of 39th ACM STOC}, pages
  \myhref{http://dx.doi.org/10.1145/1250790.1250867}{526--535}, 2007.
\newblock  \myhref{http://arxiv.org/abs/quant-ph/0611054}{{\ttfamily
  arXiv:quant-ph/0611054}}.

\bibitem{james:symmetricGroup}
G.~James and A.~Kerber.
\newblock {\em The Representation Theory of the Symmetric Group}, volume~16 of
  {\em Encyclopedia of Mathematics and its Applications}.
\newblock Addison-Wesley, 1981.

\bibitem{kutin:collisionLower}
S.~Kutin.
\newblock Quantum lower bound for the collision problem with small range.
\newblock {\em Theory of Computing},
  1(1):\myhref{http://dx.doi.org/10.4086/toc.2005.v001a002}{29--36}, 2005.

\bibitem{lee:stateConversion}
T.~Lee, R.~Mittal, B.~W. Reichardt, R.~{\v Spalek}, and M.~Szegedy.
\newblock Quantum query complexity of state conversion.
\newblock In {\em Proc.\ of 52nd IEEE FOCS}, pages
  \myhref{http://dx.doi.org/10.1109/FOCS.2011.75}{344--353}, 2011.
\newblock  \myhref{http://arxiv.org/abs/1011.3020}{{\ttfamily
  arXiv:1011.3020}}.

\bibitem{lee:strongDirect}
T.~Lee and J.~Roland.
\newblock A strong direct product theorem for quantum query complexity.
\newblock {\em Computational Complexity},
  22(2):\myhref{http://dx.doi.org/10.1007/s00037-013-0066-8}{429--462}, 2013.
\newblock Earlier: \myhref{http://dx.doi.org/10.1109/CCC.2012.17}{{\em
  CCC'12}},  \myhref{http://arxiv.org/abs/1104.4468}{{\ttfamily
  arXiv:1104.4468}}.

\bibitem{midrijanis:setEquality}
G.~Midrij{\=a}nis.
\newblock A polynomial quantum query lower bound for the set equality problem.
\newblock In {\em Proc.\ of 31st ICALP}, volume 3142 of {\em LNCS}, pages
  \myhref{http://dx.doi.org/10.1007/978-3-540-27836-8_83}{996--1005}. Springer,
  2004.
\newblock  \myhref{http://arxiv.org/abs/quant-ph/0401073}{{\ttfamily
  arXiv:quant-ph/0401073}}.

\bibitem{reichardt:advTight}
B.~W. Reichardt.
\newblock Reflections for quantum query algorithms.
\newblock In {\em Proc.\ of 22nd ACM-SIAM SODA}, pages
  \myhref{http://dx.doi.org/10.1137/1.9781611973082.44}{560--569}, 2011.
\newblock  \myhref{http://arxiv.org/abs/1005.1601}{{\ttfamily
  arXiv:1005.1601}}.

\bibitem{reichardt:formulae}
B.~W. Reichardt and R.~{\v Spalek}.
\newblock Span-program-based quantum algorithm for evaluating formulas.
\newblock {\em Theory of Computing},
  8:\myhref{http://dx.doi.org/10.4086/toc.2012.v008a013}{291--319}, 2012.
\newblock Earlier: \myhref{http://dx.doi.org/10.1145/1374376.1374394}{{\em
  STOC'08}},  \myhref{http://arxiv.org/abs/0710.2630}{{\ttfamily
  arXiv:0710.2630}}.

\bibitem{sagan:symmetricGroup}
B.~E. Sagan.
\newblock {\em The symmetric group: representations, combinatorial algorithms,
  and symmetric functions}, volume 203 of {\em Graduate Texts in Mathematics}.
\newblock Springer, 2001.

\bibitem{serre:representation}
J.-P. Serre.
\newblock {\em Linear Representations of Finite Groups}, volume~42 of {\em
  Graduate Texts in Mathematics}.
\newblock Springer, 1977.

\bibitem{shi:collisionLowerOriginal}
Y.~Shi.
\newblock Quantum lower bounds for the collision and the element distinctness
  problems.
\newblock In {\em Proc.\ of 43th IEEE FOCS}, pages
  \myhref{http://dx.doi.org/10.1109/SFCS.2002.1181975}{513--519}, 2002.
\newblock  \myhref{http://arxiv.org/abs/quant-ph/0112086}{{\ttfamily
  arXiv:quant-ph/0112086}}.

\bibitem{spalek:adv-array}
R.~{\v Spalek}.
\newblock Adversary lower bound for the orthogonal array problem.
\newblock  \myhref{http://arxiv.org/abs/1304.0845}{{\ttfamily
  arXiv:1304.0845}}, 2013.

\bibitem{spalek:advEquivalent}
R.~{\v Spalek} and M.~Szegedy.
\newblock All quantum adversary methods are equivalent.
\newblock {\em Theory of Computing},
  2:\myhref{http://dx.doi.org/10.4086/toc.2006.v002a001}{1--18}, 2006.
\newblock Earlier: \myhref{http://dx.doi.org/10.1007/11523468_105}{{\em
  ICALP'05}},  \myhref{http://arxiv.org/abs/quant-ph/0409116}{{\ttfamily
  arXiv:quant-ph/0409116}}.

\bibitem{zhandry:howToConstruct}
M.~Zhandry.
\newblock How to construct quantum random functions.
\newblock In {\em Proc.\ of 53rd IEEE FOCS}, pages
  \myhref{http://dx.doi.org/10.1109/FOCS.2012.37}{679--687}, 2012.
\newblock  \myhref{http://eprint.iacr.org/2012/182}{{\ttfamily
  ePrint:2012/182}}.

\bibitem{zhandry:setEquality}
M.~Zhandry.
\newblock A note on the quantum collision and set equality problems.
\newblock {\em Quantum Information \& Computation}, 15(7\&8):557--567, 2015.
\newblock  \myhref{http://arxiv.org/abs/1312.1027}{{\ttfamily
  arXiv:1312.1027}}.

\bibitem{zhang:advPower}
S.~Zhang.
\newblock On the power of {Ambainis} lower bounds.
\newblock {\em Theoretical Computer Science},
  339(2):\myhref{http://dx.doi.org/10.1016/j.tcs.2005.01.019}{241--256}, 2005.
\newblock  \myhref{http://arxiv.org/abs/quant-ph/0311060}{{\ttfamily
  arXiv:quant-ph/0311060}}.

\end{thebibliography}
\end{document}